\algnewcommand\algorithmicinput{\textbf{INPUT:}}
\algnewcommand\INPUT{\item[\algorithmicinput]}
\algnewcommand\algorithmicoutput{\textbf{OUTPUT:}}
\algnewcommand\OUTPUT{\item[\algorithmicoutput]}
\newcommand{\problem}[1]{\textsc{#1}\xspace}
\newcommand{\cclass}[1]{\ensuremath{\mbox{\textup{#1}}}\xspace}
\newcommand{\FVS}{\problem{Feedback Vertex Set}}
\newcommand{\ASAT}{\problem{Almost 2-SAT}}
\newcommand{\VC}{\problem{Vertex Cover}}
\newcommand{\prev}{\mathrm{prev}}
	\newtheorem{theorem}{Theorem}
	\newtheorem{lemma}{Lemma}
	\theoremstyle{definition}
	\newtheorem{definition}{Definition}
\title{Linear-time Kernelization for Feedback Vertex Set}
	\author{Yoichi Iwata\\ National Institute of Informatics\\ \texttt{yiwata@nii.ac.jp}}
\date{}
\begin{document}
\maketitle

\begin{abstract}
In this paper, we propose an algorithm that, given an undirected graph $G$ of $m$ edges and an integer $k$,
computes a graph $G'$ and an integer $k'$ in $O(k^4 m)$ time such that (1) the size of the graph $G'$ is $O(k^2)$,
(2) $k'\leq k$, and (3) $G$ has a feedback vertex set of size at most $k$ if and only if $G'$ has a feedback vertex set of size at most $k'$.
This is the first linear-time polynomial-size kernel for \FVS.
The size of our kernel is $2k^2+k$ vertices and $4k^2$ edges, which is smaller than the previous best
of $4k^2$ vertices and $8k^2$ edges.
Thus, we improve the size and the running time simultaneously.
We note that under the assumption of $\cclass{NP}\not\subseteq\cclass{coNP}/\cclass{poly}$, \FVS does not admit an $O(k^{2-\epsilon})$-size kernel for any $\epsilon>0$.

Our kernel exploits \emph{$k$-submodular relaxation}, which is a recently developed technique for obtaining efficient FPT algorithms for various problems.
The dual of $k$-submodular relaxation of \FVS can be seen as a half-integral variant of $A$-path packing,
and to obtain the linear-time complexity, we propose an efficient augmenting-path algorithm for this problem.
We believe that this combinatorial algorithm is of independent interest.

A solver based on the proposed method won first place in the 1st Parameterized Algorithms and Computational Experiments (PACE) challenge.

\end{abstract}

\section{Introduction}

\subsection{FPT Algorithms and Kernels}
In the theory of parameterized complexity, we introduce parameters to problems and analyze the complexity with respect to both the input length $n=|x|$ and the parameter value $k$.
If an algorithm runs in $f(k)n^{O(1)}$ time for any input of length $n$ and a parameter $k$, it is called a \emph{fixed-parameter tractable (FPT) algorithm}.
If the $n^{O(1)}$ factor is linear, it is called a \emph{linear-time FPT}.
The typical goal of parameterized algorithms is to develop FPT algorithms with a small $f(k)$ (e.g., $c^k$ for a small constant~$c$) and a small $n^{O(1)}$ (e.g., linear in $n$).
Although there are many algorithms that have been developed with smaller $f(k)$ \emph{or} $n^{O(1)}$, achieving the smallest $f(k)$ and $n^{O(1)}$ \emph{simultaneously} is a very difficult task,
and the smallest $f(k)$ factors and the smallest $n^{O(1)}$ factors are often achieved by different algorithms.
Moreover, when trying to improve the $f(k)$ factor, the $n^{O(1)}$ factor is often ignored by using the $O^*$ notation, which hides factors polynomial in $n$,
and when trying to improve the $n^{O(1)}$ factor, the $f(k)$ factor is often ignored by assuming $k$ is a constant.

If we take into account not only whether $f(k)$ is a single exponential ($f(k)=c^k$) or not, but also the base of
exponent $c$, achieving the smallest $c$ and the smallest $n^{O(1)}$ simultaneously becomes much more difficult.
For example, in recent papers, Iwata, Oka, and Yoshida~\cite{DBLP:conf/soda/IwataOY14}, and Ramanujan and Saurabh~\cite{DBLP:conf/soda/RamanujanS14}
have independently obtained $O(4^k m)$-time algorithms for \ASAT, which is a parameterized version of \problem{Max 2-SAT} where a parameter is the number of unsatisfied clauses;
on the other hand, when allowing the $n^{O(1)}$ factor to be super-linear, there exists an $O^*(2.32^k)$-time algorithm~\cite{DBLP:journals/talg/LokshtanovNRRS14}.
These two algorithms are not comparable: the former runs faster when the input is large but the latter runs faster when the parameter is large.
Typically, only algorithms with the smallest $f(k)$ factor or the smallest $n^{O(1)}$ factor have been studied.
However, if there were three algorithms running in time $O(8^k n)$, $O(4^k n^2)$, and $O(2^k n^3)$, all of them are
incomparable: the first is fastest when $4^k<n$, the second is fastest when $2^k<n<4^k$, and the third is fastest when $n<2^k$.
Do we need to develop algorithms with the smallest possible $f(k)$ factor for each $n^d$?
We observe that \emph{kernelization}, which is another basic research object of the parameterized complexity, is useful for avoiding this Pareto optimality.

A kernelization algorithm (or \emph{kernel}) for a parameterized problem is an algorithm that, given an instance $(x,k)$ in time polynomial in $n=|x|$ and $k$,
returns an equivalent instance $(x',k')$ of the same problem
such that $k'\leq k$ and $|x'|\leq g(k)$ for some function $g$.
When the $n^{O(1)}$ factor in the running time is linear in $n$ (i.e., $k^{O(1)}n$), it is called a \emph{linear-time kernel}.
If there is a kernel, by solving the reduced instance exhaustively, we can obtain an FPT algorithm.
Actually, the converse is also true; if there exists an $f(k)n^{O(1)}$-time FPT algorithm, there also exists a kernel of size $f(k)$.
On the other hand, the existence of a polynomial-size (i.e., $|x'|\leq k^{O(1)}$) kernel is non-trivial and, actually,
there are known to exist FPT problems which (unconditionally) do not have any sub-exponential-size kernels~\cite{DBLP:journals/jcss/BodlaenderDFH09}.
As in the case of FPT algorithms, the typical goal is to develop kernelization algorithms with a small size $g(k)$ (e.g.,
linear in $k$) and a fast running time (e.g., linear in $n$).

Although several simple kernels, including the $2k$-vertex kernel for \VC~\cite{NemhauserT75}, are linear-time kernels,
compared with linear-time FPT algorithms, there are only a small number of studies for linear-time polynomial-size kernels.
Examples include \problem{$d$-Hitting Set}~\cite{DBLP:journals/algorithmica/Bevern14},
\problem{Dominating Set} on planar graphs~\cite{DBLP:conf/iwpec/BevernHKNW11,DBLP:conf/iwpec/Hagerup11},
and \problem{$n - k$ Clique Covering}~\cite{DBLP:conf/wg/ChorFJ04}.
There are two reasons for this.
First, obtaining a polynomial-size kernel is already more difficult than obtaining FPT algorithms; there are many problems in FPT for which no polynomial-size kernels are known.
Second, when assuming the parameter $k$ is a constant, which is often done when studying linear-time FPT algorithms,
kernels become uninteresting because we cannot distinguish between $f(k)$ and $k^{O(1)}$.

Nevertheless, improving the $n^{O(1)}$ factor in the running time of kernels is very important because such kernels
can be used as preprocessing for FPT algorithms.
Let us assume that we have a $k^{O(1)}n^d$-time polynomial-size kernel and an $f(k)n^{O(1)}$-time FPT algorithm.
Then, by applying the FPT algorithm against the instance reduced by the kernel, we obtain a
$k^{O(1)}(f(k)+n^d)$-time FPT algorithm.
Thus, the $n^{O(1)}$ factor of any FPT algorithm can be replaced by $n^d$.
Therefore, if we have a linear-time polynomial-size kernel, we obtain a linear-time FPT algorithm that
simultaneously achieves the smallest possible $f(k)$ factor (ignoring factors polynomial in $k$).
This also implies that after obtaining a linear-time polynomial-size kernel, we can safely ignore the $n^{O(1)}$ factor
and focus on improving the $f(k)$ factor only.
Moreover, it can also be combined with another kernel of smaller size.
Let us assume that we have a $k^{O(1)}n^d$-time polynomial-size kernel and a $g(k)$-size kernel.
Then, by applying the second kernel against the instance reduced by the first kernel, we obtain a $k^{O(1)}n^d$-time
$g(k)$-size kernel.
Therefore, in contrast to the case of FPT algorithms, we can always achieve the smallest size and the fastest running time
simultaneously.

\subsection{Our Contribution}

In this paper, we propose a linear-time quadratic-size kernel for \FVS.
This is the first linear-time polynomial-size kernel for this problem.
\FVS is a problem to decide whether a given undirected graph has a vertex set of size at most a given parameter $k$
whose removal makes the graph a forest.
\FVS is one of the most comprehensively studied problems in the field of parameterized complexity and many different FPT algorithms and
kernels have been developed.
Moreover, the problem was chosen as a target problem of the 1st Parameterized Algorithms and Computational Experiments (PACE) challenge\footnote{\url{https://pacechallenge.wordpress.com/}}.
Actually, this research is strongly motivated by the PACE challenge.
The proposed methods are easy to implement, and a solver\footnote{\url{https://github.com/wata-orz/fvs}} based on the proposed methods won first place in the challenge.

The first FPT algorithm for \FVS was given by Downey and Fellows~\cite{DBLP:conf/dagstuhl/DowneyF92}.
This algorithm and subsequent improved algorithms~\cite{DBLP:conf/iwpec/KanjPS04,DBLP:journals/talg/RamanSS06} use the
strategy to branch on short cycles and the $f(k)$ factor of the running time is not a single exponential in $k$.
The first single-exponential FPT algorithms were obtained independently by
Dehne~et~al.~\cite{DBLP:journals/mst/DehneFLRS07} and Guo~et~al.~\cite{DBLP:journals/jcss/GuoGHNW06}, and
several improved algorithms have been
obtained~\cite{DBLP:journals/jcss/ChenFLLV08,DBLP:journals/algorithmica/CaoC015,DBLP:journals/ipl/KociumakaP14}.
The current smallest $f(k)$ factor for deterministic algorithms is $3.62^k$ given by Kociumaka and
Pilipczuk~\cite{DBLP:journals/ipl/KociumakaP14}.
These single-exponential FPT algorithms use the \emph{iterative compression} technique.
For a graph with $n$ vertices and $m$ edges\footnote{If a graph has a feedback vertex set of size at most $k$, we
have $m=O(kn)$.}, a naive implementation of iterative compression requires
$n$ iterations and each iteration takes $f(k)\Omega(m)$ time.
Therefore, the total running time is $f(k)\Omega(nm)$.
For the case of \FVS, by combining it with 2-approximation
algorithms~\cite{DBLP:journals/siamdm/BafnaBF99,DBLP:journals/ai/BeckerG96}, we can solve the problem using only a single
iteration; however, this increases the running time for one iteration to $f(2k)\Omega(m)$.
Thus, for obtaining a linear-time FPT algorithm, the $f(k)$ factor needs to grow from $3.62^{k}$ to $3.62^{2k}$.
When allowing randomness, a simple $O(4^k km)$-time FPT algorithm using random sampling of edges was given by
Becker~et~al.~\cite{DBLP:journals/jair/BeckerBG00}
The current smallest $f(k)$ factor for randomized FPT algorithms is $3^k$ given by
Cygan~et~al.~\cite{DBLP:conf/focs/CyganNPPRW11}
This algorithm uses dynamic programming on tree-decompositions and takes $3^k k^{O(1)}n^2$ time after obtaining a
tree-decomposition of width at most $k$.
As discussed above, by using our linear-time polynomial-size kernel, we can obtain a $k^{O(1)}(3.62^k+m)$-time
deterministic FPT algorithm and a $k^{O(1)}(3^k+m)$-time randomized FPT algorithm.

The first polynomial-size kernel was given by Burrage~et~al.~\cite{DBLP:conf/iwpec/BurrageEFLMR06}
The size of this kernel is $O(k^{11})$, which was improved to $O(k^3)$ by Bodlaender and van
Dijk~\cite{DBLP:journals/mst/BodlaenderD10}, and to $O(k^2)$ by Thomass\'e~\cite{DBLP:journals/talg/Thomasse10}.
Finally, Dell and van Melkebeek~\cite{DBLP:journals/jacm/DellM14} showed that there are no kernels of size
$O(k^{2-\epsilon})$ for any constant $\epsilon>0$ unless $\cclass{NP}\subseteq\cclass{coNP}/\cclass{poly}$.
The size of the current smallest kernel by Thomass\'e is $4k^2$ vertices and $8k^2$ edges.
As discussed above, if there is a linear-time polynomial-size kernel, by combining it with the smallest kernel, we can
achieve the linear running time and the smallest kernel size simultaneously.
However, our linear-time quadratic-size kernel does not rely on such a combination.

Before providing a description of our kernel, we first give a brief description of a key idea behind the existing kernels.
All the existing kernels for \FVS exploit \emph{$s$-flowers}.
A set of simple cycles is called an $s$-flower if each cycle contains the vertex $s$ and none of two cycles share a
vertex different from $s$.
If the degree of $s$ is large and the graph is well-connected, there exists a large $s$-flower.
Because the size of an $s$-flower (i.e., the number of cycles) gives a lower bound of the size of the minimum feedback
vertex set that does not contain $s$, if it is larger than the parameter $k$, we can remove $s$.
Otherwise, the degree of $s$ is small, or the graph is not well-connected.
In the former case, we know that the graph is small, and in the latter case, we can apply another reduction rule.

In our kernel, instead of $s$-flowers, we exploit \emph{$k$-submodular relaxation}, which is a recently developed
technique for obtaining efficient FPT algorithms for various problems.
The concept of $k$-submodular relaxation was independently discovered by Wahlstr\"om~\cite{DBLP:conf/soda/Wahlstrom14}
and by Iwata and Yoshida, and two results are combined in the full version~\cite{DBLP:journals/corr/Wahlstrom13}.
The $k$-submodular relaxation is a technique to obtain \emph{half-integral} and \emph{persistent} relaxations and many
existing half-integral LP relaxations (e.g., the LP relaxation of \VC~\cite{NemhauserT75}) can be re-derived by this
technique.
If a problem admits such a relaxation, the branch-and-bound method gives an FPT algorithm.
By applying $k$-submodular relaxation, Wahlstr\"om~\cite{DBLP:conf/soda/Wahlstrom14} obtained an $O^*(4^k)$-time FPT
algorithm for \FVS.
The detail description of the $k$-submodular for \FVS is given in Section~\ref{sec:pre}.
By exploiting $k$-submodular relaxation, we obtain a very simple kernel for \FVS.
The size of our kernel is $2k^2+k$ vertices and $4k^2$ edges, which is smaller than the previous best
of $4k^2$ vertices and $8k^2$ edges~\cite{DBLP:journals/talg/Thomasse10}.

We observe that there is a strong relationship between the $k$-submodular relaxation of \FVS and $s$-flowers;
the problem of computing a maximum $s$-flower is the integral dual of the $k$-submodular relaxation of \FVS.
This resembles the situation for \ASAT.
For \ASAT, Raman~et~al.~\cite{DBLP:conf/esa/RamanRS11} obtained an $O^*(9^k)$-time FPT algorithm by a reduction to
\problem{Vertex Cover above Maximum Matching}, and then both the $f(k)$
factor~\cite{DBLP:journals/talg/LokshtanovNRRS14} and $n^{O(1)}$
factor~\cite{DBLP:conf/soda/IwataOY14,DBLP:conf/soda/RamanujanS14} were improved by a reduction to \problem{Vertex Cover above LP}.
Here, the maximum matching is the integral dual of the LP relaxation of \VC.
Because the fractional minimum of the primal LP is always at least the integral maximum of the dual LP, by using the
half-integral relaxation instead of the integral dual, we can obtain a better lower bound.
Moreover, by using the half-integral relaxations, we can directly exploit the persistency of the relaxations.

For obtaining linear-time kernel, we need to solve the $k$-submodular relaxation of \FVS efficiently.
Iwata, Wahlstr\"om, and Yoshida~\cite{DBLP:journals/corr/Wahlstrom13} showed that various cases of $k$-submodular
relaxations can be solved efficiently by a reduction to the minimum cut problem, and by using this method, they obtained
linear-time FPT algorithms for various problems including \problem{Unique Label Cover}.
 However, this method cannot be applied to the $k$-submodular relaxation of \FVS for two reasons:
the method can be applied only to edge-deletion problems and, moreover, when applied to \FVS, the size of the network
for the minimum cut problem becomes exponential in $m$.

For solving the $k$-submodular relaxation of \FVS efficiently,
we propose a max-flow-like augmenting-path algorithm.
This is the most technical part of the paper.
Our algorithm can compute a minimum solution in $O(km)$ time.
We note that this algorithm can be used not only for the linear-time kernel but also for improving the $n^{O(1)}$ factor
of the $O^*(4^k)$-time FPT branch-and-bound algorithm for \FVS.
Wahlstr\"om~\cite{DBLP:conf/soda/Wahlstrom14} applied $k$-submodular relaxation to two general
versions, \problem{Subset Feedback Vertex Set} and \problem{Group Feedback Vertex Set}, and obtained $O^*(4^k)$-time FPT
algorithms for both problems.
Extending our algorithm for solving the $k$-submodular relaxation of these two general versions would be a possible
approach for obtaining linear-time FPT algorithms for these problems.
We note that for \problem{Subset Feedback Vertex Set}, an $O(25.6^k m)$-time randomized FPT algorithm and a $(2^{O(k\log
k)}m)$-time deterministic FPT algorithm have been obtained using a very different
approach~\cite{DBLP:conf/icalp/LokshtanovRS15}.

\subsection{Relation to $A$-path Packing}
Because $s$-flower is the integral dual of the $k$-submodular relaxation of \FVS, one may think that, by creating a copy of each vertex, we
can compute the maximum half-integral $s$-flower which would be a fractional dual of the $k$-submodular relaxation.
However, this is not true.
In the fractional dual of the $k$-submodular relaxation, we need to avoid U-turns (see Section~\ref{sec:flower}
for a detailed description of the fractional dual of the $k$-submodular relaxation).
For example, $s$--$a$--$b$--$c$--$d$--$a$--$s$ is a valid cycle\footnote{This cycle contains the vertex $a$ twice but can be packed half-integrally. When focusing on integral packings, this cycle also becomes invalid.}
but $s$--$a$--$b$--$c$--$b$--$a$--$s$ is invalid;
however, after creating a copy $v'$ for each vertex $v$, we cannot distinguish between a valid cycle $s$--$a$--$b$--$c$--$d$--$a'$--$s$ and an
invalid cycle $s$--$a$--$b$--$c$--$b'$--$a'$--$s$.

Given a set of vertices $A$, a path is called an \emph{$A$-path} if the two end points are in $A$.
Although the maximum $s$-flower can be computed by a reduction to the $A$-path packing problem~\cite{DBLP:journals/talg/Thomasse10},
the $k$-submodular relaxation of \FVS cannot be solved by the reduction to $A$-path packing due to the above reason.
We observe that by a reduction to a more general problem, $A$-path packing in group labelled graphs, we can avoid U-turns
and thus the $k$-submodular relaxation of \FVS can be solved.
However, the current fastest algorithm for this general problem~\cite{DBLP:journals/siamdm/Yamaguchi16} is not enough
for obtaining linear-time kernel.
Our augmenting-path algorithm is actually solving a special case of the fractional group-labelled $A$-path packing,
and it might be useful for obtaining a fast algorithm for the general case.

\subsection{Organization}
First, in Section~\ref{sec:pre}, we give definitions used in this paper, introduce common techniques of kernelization of \FVS,
and introduce the $k$-submodular relaxation of \FVS.
In Section~\ref{sec:simple}, we give a simple quadratic-size kernel by exploiting the $k$-submodular relaxation.
In Section~\ref{sec:flower}, we give an $O(km)$-time augmenting-path algorithm for solving the $k$-submodular
relaxation.
Finally, in Section~\ref{sec:linear}, we give a linear-time quadratic-size kernel by combining these two results.

\section{Preliminaries}\label{sec:pre}

\subsection{Definitions}

A \emph{multiplicity function} of a multiset $S$ is denoted by $\mathbf{1}_S$;
e.g., when $S=\{a,a,b\}$, $\mathbf{1}_S(a)=2$, $\mathbf{1}_S(b)=1$, and $\mathbf{1}_S(c)=0$.
Let $f:U\rightarrow\mathbb{R}$ be a function.
For a multiset $S$, we write the sum of $f(a)$ over $a\in S$ as $f(S)=\sum_{a\in S}f(a)$;
e.g., when $S=\{a,a,b\}$, $f(S)=2f(a)+f(b)$.
We denote the preimage of $i\in\mathbb{R}$ under $f$ by $f^{-1}(i)=\{a\in U\mid f(a)=i\}$.

Let $G=(V,E)$ be an undirected graph.
We assume that $G$ may contain a self-loop and multiple edges.
We will often denote the number of vertices by $n$ and the number of edges by $m$.
We denote the set of edges incident to a vertex $v$ by $\delta_G(v)$ and define the \emph{degree} of $v$ as $d_G(v)=|\delta_G(v)|$.
Here, we note that multiple edges contribute to the degree by its multiplicity, and we never refer to the degree of a vertex having a self-loop.
We omit the subscript $G$ if it is clear from the context.
An edge $e\in E$ is called a \emph{bridge} if its removal increases the number of connected components.

For a vertex set $S$, we denote the graph obtained by removing $S$ and their incident edges by $G-S$.
When $S$ is a singleton $\{v\}$, we simply write $G-v$.
A vertex set $S\subseteq V$ is called a \emph{feedback vertex set} if $G-S$ is a forest.
We denote the size of the minimum feedback vertex set of $G$ by $\mathrm{fvs}(G)$.

A \emph{walk} is an ordered list $(v_0,e_1,v_1,e_2,\ldots,v_{l-1},e_l,v_l)$ such that $l\geq 1$ and each edge $e_i$ connects vertices $v_{i-1}$ and $v_i$.
Note that it may contain a vertex or an edge multiple times.
For a walk $W=(v_0,e_1,\ldots,v_l)$, we denote the multiset of vertices appearing on $W$ by $V(W)=\{v_0,\ldots,v_l\}$
and the multiset of edges appearing on $W$ by $E(W)=\{e_1,\ldots,e_l\}$.

\subsection{Basic Reductions}
We introduce basic reductions that have been commonly used in kernelization algorithms for
\FVS~\cite{DBLP:conf/iwpec/BurrageEFLMR06,DBLP:journals/mst/BodlaenderD10,DBLP:journals/talg/Thomasse10}.
The correctness of these reductions is almost trivial.

\begin{enumerate}
  \setlength{\parskip}{0.05cm}
  \setlength{\itemsep}{0.05cm}
  \item If there is a vertex $v$ containing a self-loop, remove $v$ and decrease $k$ by one.
  \item If there is a vertex of degree at most one, remove it.
  \item If there is a vertex of degree two, remove it and connect its two neighbors by an edge.
  \item If two vertices are connected by more than two edges, replace these edges with a double edge.
\end{enumerate}

Note that rule 3 can remove a vertex that is only incident to a double edge; in this case, it creates a self-loop on its neighbor.
These basic reductions never increase the degree of any vertex and can be fully applied in $O(m)$ time.
After the reduction, the obtained graph has no self-loops and has minimum degree at least three.

We will use the following lemma to bound the size of the kernel.
Because this is a general version of the lemma in \cite{DBLP:journals/talg/Thomasse10}, we give a modified proof.

\begin{lemma}[Thomass\'e~\cite{DBLP:journals/talg/Thomasse10}]\label{lem:size}
If a graph without self-loops satisfies both of the following for an integer $d$, the size of the minimum feedback vertex set is larger than $k$:
\begin{itemize}
  \setlength{\parskip}{0.05cm}
  \setlength{\itemsep}{0.05cm}
  \item At least one of $n>dk+k$ or $m>2dk$ holds; and
  \item for any $v\in V$, it holds that $3\leq d(v)\leq d$.
\end{itemize}
\end{lemma}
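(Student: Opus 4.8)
The plan is to derive a contradiction by assuming that a feedback vertex set $S$ of size at most $k$ exists, and showing that the remaining forest $G-S$ then has too few edges to be consistent with the degree and size bounds. First I would fix such an $S$ with $|S|\le k$ and let $F=G-S$, a forest on $n-|S|$ vertices, hence with at most $n-|S|-1$ edges. The edges of $G$ split into three groups: those inside $F$, those inside $S$, and those between $S$ and $F$. The first group has at most $n-|S|-1$ edges. The second group has at most $\binom{|S|}{2}$ edges if $G$ is simple, but since the paper allows double edges and $S$ has been reduced, I would bound it by $2\binom{|S|}{2}$ (or more crudely by $2\binom{k}{2}$); this is a lower-order term. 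The key step is bounding the third group: each vertex of $S$ has degree at most $d$, so the number of edges incident to $S$ is at most $d|S|\le dk$, which in particular caps the $S$--$F$ edges by $dk$.

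Next I would combine these bounds with the hypothesis that $d(v)\ge 3$ for every $v$. Summing degrees over $V$ gives $2m=\sum_v d(v)\ge 3n$, so $m\ge \tfrac{3}{2}n$, i.e. $n\le \tfrac{2}{3}m$. On the other hand, counting $m$ via the three edge groups above yields $m\le (n-|S|-1) + (\text{edges in }S) + dk \le n + dk + O(k^2)$, and plugging $|S|\le k$ more carefully I expect to get something like $m \le n - k - 1 + dk + (\text{small})$, which rearranges to $n \ge m - dk + k + 1 - (\text{small})$. Feeding this into $n\le \tfrac{2}{3}m$ forces $m \le 2dk + O(k)$ roughly, and symmetrically $n\le dk + O(k)$; the claim is that when the constants are tracked exactly one gets precisely $n \le dk+k$ and $m\le 2dk$, contradicting the assumed first bullet. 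I would be careful to do the edges-within-$S$ accounting so it is absorbed into the $+k$ slack rather than appearing as a separate term — this is the place the "general version" differs from Thomassé's original and where the modified proof is needed.

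The main obstacle I anticipate is getting the constants in the two inequalities $n>dk+k$ and $m>2dk$ to line up \emph{exactly} rather than up to $O(k)$ slack, which requires being disciplined about (i) whether $F$ is a forest with at most $n-|S|-1$ edges or we only know at most $n-|S|-c$ for $c$ the number of components of $F$, (ii) how edges internal to $S$ are counted against the degree budget $dk$ (they are counted \emph{twice} in $\sum_{v\in S}d(v)$, which actually helps), and (iii) making the argument uniform so that violating \emph{either} bullet suffices. Concretely I would prove the contrapositive in the cleanest form: if $\mathrm{fvs}(G)\le k$ then both $n\le dk+k$ and $m\le 2dk$. For the second, write $2m = \sum_{v\in S} d(v) + \sum_{v\in F} d_G(v) \le dk + \big(2\,|E(F)| + |E(S,F)|\big) \le dk + 2(n-|S|-1) + dk$, and then use $n - |S| \le$ (a bound coming from the first inequality already established, or from $3(n-|S|)\le 2|E(F)|+|E(S,F)|\le 2(n-|S|-1)+dk$ directly) to eliminate $n$; I expect $3(n-k)\le 3(n-|S|)\le 2(n-|S|)+dk$ to give $n-|S|\le dk$, hence $n\le dk+k$, and then substituting back $2m\le dk+2dk+dk=4dk$ so $m\le 2dk$. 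That self-contained chain, with $|S|\le k$ used exactly twice, should close it; the only real care is justifying $3(n-|S|)\le 2|E(F)|+|E(S,F)|$, which holds because every vertex of $F$ has $G$-degree $\ge 3$ and all its incident edges are counted on the right-hand side (each $F$--$F$ edge twice, each $S$--$F$ edge once).
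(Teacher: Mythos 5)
Your final chain of inequalities is correct and is essentially the paper's own argument: both bound the $S$–$F$ cross edges from above by $\sum_{v\in S}d(v)\le dk$ and from below by combining the minimum degree $3$ with the forest bound $|E(F)|\le n-|S|-1$, for the vertex count; and both bound $2m$ by adding $\sum_{v\in S}d(v)\le dk$ to the degree sum over $F$ for the edge count. The only surface difference is that you state it as a contrapositive while the paper phrases the two cases as contradictions, and your initial worry about counting edges inside $S$ (and double edges) is a red herring that you correctly dissolve by working with $\sum_{v\in S}d(v)\le dk$ rather than $|E(S)|$.
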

\begin{proof}
Let us assume that a graph $G=(V,E)$ has a feedback vertex set $S$ of size $k$ and each vertex satisfies $3\leq d(v)\leq d$.
Because the subgraph on $V\setminus S$ is a forest, the number of edges inside $V\setminus S$ is at most $n-k-1$.
Because each vertex has degree at least three, there are at least $3(n-k)-2(n-k-1)=n-k+2$ edges between $S$ and $V\setminus S$.
On the other hand, because each vertex has degree at most $d$, there can exist at most $dk$ edges between $S$ and $V\setminus S$.

When $n>dk+k$, we have $n-k+2>dk$, which is a contradiction.
Thus, the size of the minimum feedback vertex set is larger than $k$.

Because the total degree of $V$ is $2m$ and the total degree of $S$ is at most $dk$, the total degree of $V\setminus S$ is at least $2m-dk$.
Therefore, there must exist at least $2m-dk-2(n-k-1)$ edges between $S$ and $V\setminus S$.
When $n\leq dk+k$ and $m>2dk$, we have $2m-dk-2(n-k-1)>dk$, which is a contradiction.
Thus, the size of the minimum feedback vertex set is larger than $k$.
\end{proof}

In \cite{DBLP:journals/talg/Thomasse10}, a kernel of $4k^2$ vertices and $8k^2$ edges is obtained by applying the
lemma against $d=4k-1$.
In the next section, we obtain a kernel of $2k^2+k$ vertices and $4k^2$ edges by applying the lemma against $d=2k$.

\subsection{k-submodular Relaxation of \FVS}
A walk $W=(v_0,e_1,\ldots,v_l)$ is called an \emph{$s$-cycle} if $v_0=v_l=s$, $v_i\neq s$ for all $i\in\{1,\ldots,l-1\}$, $e_i\neq e_{i+1}$ for any $i\in\{1,\ldots,l-1\}$ (i.e., there are no U-turns), and each edge is contained in the walk at most twice.
For example, walks $(s,e_1,u,e_2,v,e_3,s)$ and $(s,e_1,u,e_2,v,e_3,w,e_4,u,e_1,s)$ are $s$-cycles but a walk $(s,e_1,u,e_2,v,e_2,u,e_1,s)$ is not.
Note that in this definition, we distinguish each of multiple edges;
e.g., if there is only a single edge $e$ between $s$ and $v$, a walk $(s,e,v,e,s)$ is not an $s$-cycle; however, if there is a double edge $\{e_1,e_2\}$ between $s$ and $v$, a walk $(s,e_1,v,e_2,s)$ is an $s$-cycle.

For a graph $G=(V,E)$ without self-loops and a vertex $s\in V$, a function $x: V\rightarrow\mathbb{R}_{\geq 0}$ is called an \emph{$s$-cycle cover}
if it satisfies that (1) $x(s)=0$ and (2) for any $s$-cycle $C$, $x(V(C))\geq 1$.
Note that $V(C)$ is the multiset of vertices on $C$, and therefore if $x(v)=\frac{1}{2}$ holds for a vertex $v$ contained twice in $C$, we have $x(V(C))\geq 1$.
The \emph{size} of an $s$-cycle cover $x$ is defined as $x(V)$, and when the size $x(V)$ is minimum among all the possible $s$-cycle covers,
it is called a \emph{minimum $s$-cycle cover}.

By introducing the idea of $k$-submodular relaxation, Wahlstr\"om~\cite{DBLP:conf/soda/Wahlstrom14} obtained the
following lemma.
\begin{lemma}[Wahlstr\"om~\cite{DBLP:conf/soda/Wahlstrom14}]\label{lem:ksub}
For any graph $G=(V,E)$ without self-loops and $s\in V$, the following holds:
\begin{itemize}
  \item The size of any feedback vertex set of $G$ that does not contain $s$ is at least the size of the minimum $s$-cycle cover.
  \item There exists a minimum $s$-cycle cover that takes values $\{0,\frac{1}{2},1\}$ (half-integrality).
  \item If there exists a minimum feedback vertex set that does not contain $s$,
  then for any half-integral minimum $s$-cycle cover $x$, there also exists a minimum feedback vertex set $S$ such that
  $x^{-1}(1)\subseteq S$ and $s\not\in S$ (persistency).
\end{itemize}
\end{lemma}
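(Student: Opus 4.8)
The plan is to treat the three bullets separately: the first is a short weak-duality argument, and I would obtain the latter two from the general structural properties of the $k$-submodular relaxation of \FVS.

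For the first bullet, let $S$ be a feedback vertex set of $G$ with $s\notin S$. It suffices to show that $\mathbf{1}_S$ is an $s$-cycle cover, since then it is a feasible cover of size $|S|$ (note $\mathbf{1}_S(s)=0$), so the minimum $s$-cycle cover has size at most $|S|$. So I would check that every $s$-cycle $C$ meets $S$, i.e.\ $\mathbf{1}_S(V(C))\ge 1$. Suppose not; then $C$ is a closed walk of positive length lying entirely inside the forest $G-S$. Root the tree component of $G-S$ that contains $s$ and pick a vertex $v$ on $C$ of maximum depth; since $C$ has positive length, $v\ne s$, so every occurrence of $v$ on $C$ is interior, and the no-U-turn condition forces the two edges of $C$ incident to that occurrence to be two \emph{distinct} tree edges at $v$. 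At most one of them is the edge from $v$ to its parent, so $C$ traverses an edge from $v$ to one of its children, a vertex strictly deeper than $v$ --- contradicting the choice of $v$. Hence $S$ meets every $s$-cycle, as required. (The same argument, applied to an auxiliary walk that goes out to a cycle and back, also shows that for $x\in\{0,1\}^V$, $x$ is an $s$-cycle cover iff $x^{-1}(1)\not\ni s$ and deleting $x^{-1}(1)$ leaves $s$ in a tree component; I would record this for the next step.)

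For the second and third bullets I would appeal to the $k$-submodular relaxation of \FVS of Wahlstr\"om~\cite{DBLP:conf/soda/Wahlstrom14}: the idea is that the $s$-cycle-cover linear program --- minimise $x(V)$ over $x\ge 0$ with $x(s)=0$ and $x(V(C))\ge 1$ for every $s$-cycle $C$ --- is precisely the canonical relaxation of the problem of deleting a minimum vertex set avoiding $s$ so that $s$ lies in a tree component, in which each retained vertex is labelled by the first edge on its path toward $s$ (deleted vertices getting a null label) and the cost measures how much the labelling fails to be a consistent rooted forest. Under this correspondence the feasible integral points are exactly the vertex sets identified in the previous paragraph, the cost function is $k$-submodular, and the remaining two bullets are exactly the half-integrality and persistency guarantees enjoyed by every $k$-submodular relaxation; I would then invoke that general theory, after reducing to a basic optimal solution and noting that truncating any optimal $x$ to $[0,1]$ preserves feasibility of every covering constraint (if some $v\in V(C)$ has $x(v)>1$, its contribution to the constraint for $C$ is already at least $1$).

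The main obstacle is the verification in the previous paragraph that the $s$-cycle-cover LP genuinely coincides with the $k$-submodular relaxation: one must show that the constraints ``$x(V(C))\ge1$ for all $s$-cycles $C$'' are equivalent to feasibility of the relaxation, with the multiset convention on $V(C)$ --- so that a vertex traversed twice by $C$ only needs weight $\tfrac12$ --- matching exactly the half-integral slack the relaxation permits, and that passing from a minimum labelling to a genuine minimum feedback vertex set of $G$ containing $x^{-1}(1)$ costs nothing. Once this correspondence is in place, half-integrality and persistency are black boxes and I would not reprove them. I also note that the augmenting-path algorithm developed later in the paper yields a second, constructive proof of half-integrality, since it outputs a half-integral minimum $s$-cycle cover together with a certifying lower bound.
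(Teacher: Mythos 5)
The paper does not prove Lemma~\ref{lem:ksub}; it is stated verbatim as a citation of Wahlstr\"om~\cite{DBLP:conf/soda/Wahlstrom14}, so there is no in-text proof for your attempt to be compared against. With that caveat, your proposal is sound and in fact does more than the paper does in the following sense.

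Your proof of the first bullet (weak LP duality) is a correct, self-contained argument: if an $s$-cycle $C$ avoided the feedback vertex set $S$ with $s\notin S$, then $C$ would be a closed walk inside the tree component of $s$ in $G-S$; taking a deepest vertex $v$ of $C$ (well-defined and $\neq s$ since $G$ has no self-loops, so $l\geq 2$ and $v_1\neq s$), the no-U-turn condition gives two distinct forest edges at that interior occurrence of $v$, at most one of which can go to $v$'s parent, so $C$ visits a strictly deeper vertex --- a contradiction. The parenthetical characterization of integral $s$-cycle covers as exactly the vertex sets whose deletion isolates $s$ in an acyclic component is also correct: the forward direction is the same argument, and the reverse direction is the standard out-to-a-cycle-and-back walk, for which the no-U-turn and ``each edge at most twice'' constraints are straightforwardly met when the access path is chosen shortest.

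For the second and third bullets you reduce to the generic half-integrality and persistency theorems for $k$-submodular relaxations, which is precisely Wahlstr\"om's route and hence the one the paper implicitly endorses by citing him. You correctly flag the real work as establishing that the exponential-constraint $s$-cycle-cover LP coincides with the $k$-submodular relaxation of the ``delete vertices so that $s$ lies in a tree component'' problem, and that an optimal labelling yields a genuine feedback vertex set of $G$ containing $x^{-1}(1)$ --- that second point deserves more care than your sketch gives it, since a priori the labelling formulation only controls the component of $s$, not cycles elsewhere in $G$, and one must argue that any minimum FVS disjoint from $s$ can be modified to absorb $x^{-1}(1)$ without increasing its size. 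You do not fill this in, but neither does the paper, so this is not a gap relative to the source; it is the content of~\cite{DBLP:conf/soda/Wahlstrom14}. Your closing observation that Theorem~\ref{thm:main2} and Lemma~\ref{lem:dual} in Section~\ref{sec:flower} give an independent constructive proof of half-integrality is accurate and a nice remark, though it does not by itself recover persistency.
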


Although the problem of computing a minimum $s$-cycle cover has exponential number of constraints,
Wahlstr\"om~\cite{DBLP:conf/soda/Wahlstrom14} showed that we can compute a half-integral minimum $s$-cycle cover in polynomial time by using the ellipsoid method.

\section{Simple Quadratic-size Kernel}\label{sec:simple}

In this section, we propose a simple polynomial-time quadratic-size kernel for \FVS.
By exploiting the persistency of the $k$-submodular relaxation, we propose the following reduction rule called \emph{$s$-cycle cover reduction}.

For a graph $G=(V,E)$, a vertex $s\in V$, and a half-integral minimum $s$-cycle cover $x$, create a graph $G'=(V,E')$ as follows.
Let $X=x^{-1}(1)$ and let $B\subseteq\delta(s)$ be the set of bridges of $G-X$ connecting $s$ and tree components of $G-X-s$.
Then, $G'$ is obtained from $G$ by inserting a double edge between $s$ and each of $v\in X$ and removing the edges $B$.

\begin{lemma}\label{lem:reduction}
For a graph $G=(V,E)$, a vertex $s\in V$, and a half-integral minimum $s$-cycle cover $x$,
let $G'=(V,E')$ be a graph obtained by applying the $s$-cycle cover reduction.
Then, $\mathrm{fvs}(G)=\mathrm{fvs}(G')$ holds.
\end{lemma}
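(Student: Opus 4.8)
The plan is to show the two inequalities $\mathrm{fvs}(G')\le\mathrm{fvs}(G)$ and $\mathrm{fvs}(G)\le\mathrm{fvs}(G')$ separately, in both cases by taking a minimum feedback vertex set of one graph and transforming it into a feedback vertex set of the other of no larger size. The set $X=x^{-1}(1)$ and the persistency clause of Lemma~\ref{lem:ksub} will be the main tools. I would first dispose of the trivial case in which $G$ has no feedback vertex set avoiding $s$ of size $\mathrm{fvs}(G)$: then persistency is vacuous, but I claim one can still argue directly, or alternatively handle everything uniformly by noting that if every minimum feedback vertex set of $G$ contains $s$, then $s$ together with anything dominates, and the edge changes at $s$ (adding double edges to $s$, deleting edges incident to $s$) do not affect whether a set containing $s$ is a feedback vertex set. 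So the only substantive case is when some minimum feedback vertex set of $G$ avoids $s$.

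For $\mathrm{fvs}(G)\le\mathrm{fvs}(G')$: take a minimum feedback vertex set $S'$ of $G'$. If $s\in S'$, then since $G-s$ and $G'-s$ differ only by the deletion of the bridges $B$ (the added double edges are all incident to $s$), and deleting edges cannot destroy the forest property, $S'$ is also a feedback vertex set of $G$. If $s\notin S'$, then $S'$ must hit every $s$-cycle of $G'$; in particular $S'$ must kill every double edge $\{s,v\}$ with $v\in X$, so $X\subseteq S'$. Now $G'-S'$ is a forest; I want to conclude $G-S'$ is a forest. The only edges present in $G-S'$ but not in $G'-S'$ are the bridges in $B$ that survive (those not incident to a vertex of $S'$); adding a bridge of $G-X-s$ back, with its $s$-endpoint, to a forest cannot create a cycle — this is exactly why $B$ consists of bridges connecting $s$ to \emph{tree} components of $G-X-s$, so re-inserting them is safe. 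Here the key step is verifying that re-inserting the edges of $B$ into the forest $G'-S'$ keeps it acyclic, using the bridge structure and $X\subseteq S'$.

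For $\mathrm{fvs}(G')\le\mathrm{fvs}(G)$: this is where persistency does the work. By the persistency clause of Lemma~\ref{lem:ksub}, there is a minimum feedback vertex set $S$ of $G$ with $X\subseteq S$ and $s\notin S$. I claim $S$ is also a feedback vertex set of $G'$. Since $X\subseteq S$, all the newly added double edges $\{s,v\}$ ($v\in X$) are removed in $G'-S$, so they cause no harm. And $G'-S$ is a subgraph of $G-S$ (we only deleted the bridges $B$ on top of that), hence still a forest. Therefore $\mathrm{fvs}(G')\le|S|=\mathrm{fvs}(G)$. Combining the two inequalities gives $\mathrm{fvs}(G)=\mathrm{fvs}(G')$.

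The main obstacle I anticipate is the forward-insertion argument in the first inequality: one has to be careful that re-inserting \emph{all} surviving bridges of $B$ simultaneously — not just one — keeps $G-S'$ acyclic. Each bridge in $B$ goes from $s$ to a distinct tree component of $G-X-s$, and since $X\subseteq S'$, in $G'-S'$ the vertex $s$ sits in some component and each such bridge re-attaches a (sub)tree of a previously separate component; because the far endpoints lie in components that were disconnected from $s$ in $G'-S'$ exactly when $B$ was removed, no cycle through two of them can form. I would phrase this by considering the forest $G-X-s$, observing its components that are trees, and tracking how removing/adding $B$ and removing $s$ interacts with $S'\setminus X$; the definition of $B$ (bridges of $G-X$ joining $s$ to tree components of $G-X-s$) is tailored precisely so that this goes through. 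A minor secondary point is the degenerate case $s\in S$ or $s\in S'$, which as noted above is easy but must be mentioned for completeness.
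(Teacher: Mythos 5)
Your proposal is correct and follows essentially the same approach as the paper: both directions are proved by transforming a minimum feedback vertex set of one graph into one of the other, using persistency of the $s$-cycle cover for the $\mathrm{fvs}(G')\le\mathrm{fvs}(G)$ direction, the double edges to force $X\subseteq S'$ in the other direction, and the fact that the removed edges $B$ are bridges of $G-X$ to see that restoring them cannot create cycles. The paper states the bridge step and the degenerate $s\in S$ cases more tersely, but the underlying argument is the same.
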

\begin{proof}
($\geq$)
Let $S$ be a minimum feedback vertex set of $G$.
Observe that all the inserted edges are between $s$ and $X=x^{-1}(1)$.
If $s\in S$, $S$ is also a feedback vertex set of $G'$.
Otherwise, from the persistency, we can assume that $S$ contains all the vertices of $X$.
Thus, $S$ is also a feedback vertex set of $G'$.

($\leq$)
Let $S$ be a minimum feedback vertex set of $G'$.
If $s\in S$, $S$ is also a feedback vertex set of $G$.
Otherwise, because all the vertices of $X$ are connected to $s$ by double edges in $G'$, $S$ must contain all of $X$.
Because all the deleted edges are bridges in $G-X$, $S$ is also a feedback vertex set of $G$.
\end{proof}

After applying this reduction, the degree of $s$ can be bounded as the following shows.

\begin{lemma}\label{lem:degree_bound}
For a graph $G=(V,E)$, a vertex $s\in V$, and a half-integral minimum $s$-cycle cover $x$,
let $G'=(V,E')$ be a graph obtained by applying the $s$-cycle cover reduction.
Then, $d_{G'}(s)\leq 2x(V)$ holds.
\end{lemma}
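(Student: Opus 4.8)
The plan is to partition the edges incident to $s$ in $G'$ into three groups and bound each group by the $x$-mass it "consumes", so that the total is at most $2x(V)$. After the $s$-cycle cover reduction, every edge of $\delta_{G'}(s)$ is one of: (a) an inserted double edge from $s$ to a vertex $v\in X=x^{-1}(1)$, or (b) an original edge of $\delta_G(s)$ that was not deleted (i.e.\ not in $B$). For group (a), there are exactly $2|X|$ such edges, and since $x(v)=1$ for each $v\in X$, these contribute exactly $2\sum_{v\in X}x(v)=2x(X)$ to the bound. The bulk of the work is therefore to show that the surviving original edges at $s$ number at most $2x(V\setminus X)$; combined with $x(X)=|X|$ this gives $d_{G'}(s)\le 2x(X)+2x(V\setminus X)=2x(V)$.

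First I would set up the key combinatorial object: after deleting $X$ and $s$, the graph $G-X-s$ decomposes into connected components; an original edge $e\in\delta_G(s)$ that survives either (i) goes to a vertex lying in a component of $G-X-s$ that is a tree but via a non-bridge of $G-X$ (so that component has $\ge 2$ edges to $s$), or (ii) goes into a component of $G-X-s$ that is not a tree (contains a cycle), or (iii) goes to $X$ — but edges to $X$ are handled separately since they are replaced/absorbed by the double edges; I would note that after reduction we may assume (by taking a multiset view consistent with rule~4-style bookkeeping) that original $s$--$X$ edges need not be counted beyond the double edge. The crucial point is: for each surviving original edge $e=\{s,u\}$, I want to exhibit an $s$-cycle through $e$ whose vertex-multiset, restricted to $V\setminus(X\cup\{s\})$, can be "charged" to at least $\tfrac12$ of available $x$-mass, and to do this disjointly across a suitable pairing of these edges.

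The cleanest route is to pair up the surviving original edges at $s$. Consider the graph $H=G-X$. In $H$, the vertex $s$ has some set $F$ of incident edges (the surviving ones, since deleted edges $B$ were exactly the bridges of $H$ to tree components). Every edge of $F$ that is a bridge of $H$ — after the reduction only bridges to \emph{non-tree} components of $H-s$ remain, because bridges to tree components were deleted — leads into a component of $H-s$ containing a cycle $C_0$; then $s$--$u$--(path in the component to $C_0$)--(around $C_0$)--(back)--$u$--$s$ is an $s$-cycle (U-turns avoided since $C_0$ supplies at least two distinct edges at the return vertex, and each edge used at most twice), so $x$ must put total mass $\ge 1$ on its vertices, and since $x(X)$ contributes nothing here (those vertices are removed), mass $\ge 1$ sits on $V\setminus X$. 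For the non-bridge edges of $F$: each lies on a cycle within $H$, and such a cycle, if it avoids $s$, again forces $x$-mass $\ge 1$ on $V\setminus X$ along a detour; if every cycle through that edge passes through $s$, then pairing that edge with its cycle-partner edge at $s$ yields an $s$-cycle using both, again forcing mass $\ge 1$, now attributable to \emph{two} edges of $F$, i.e.\ $\tfrac12$ each.

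The main obstacle will be the disjointness bookkeeping: I need the $x$-masses I charge for different edges (or edge-pairs) of $F$ to live on disjoint vertex sets, or else to control the overlap, so that summing the "$\ge\tfrac12$ per edge" bounds actually yields $|F|\le 2x(V\setminus X)$ rather than something weaker. I expect this to follow from choosing, for each surviving edge or pair, the $s$-cycle to stay inside a fixed component of $H-s$ (components are vertex-disjoint, giving disjointness across components for free) and then, within one component, using that the half-integrality of $x$ together with a short-cycle / ear argument lets me route through fresh vertices; the half-integral values $\{0,\tfrac12,1\}$ are exactly what makes the "$\tfrac12$ per edge" accounting tight. Once that disjoint charging is in place, summing over all components gives $|F|\le 2x(V\setminus X)$, and adding the $2x(X)$ from the inserted double edges completes the bound $d_{G'}(s)\le 2x(V)$.
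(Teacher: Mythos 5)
Your high-level decomposition matches the paper's: bound the contribution of the inserted double edges by $2|X|=2x(X)$, and bound the surviving original edges at $s$ by charging $x$-mass on $V\setminus X$, aiming for $d_{G'}(s)\le 2x(X)+2x(V\setminus X)$. The bridge/non-bridge case split and the observation that the deleted bridges are exactly those to tree components are also on the right track.

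However, the core of the lemma is exactly the step you flag as ``the main obstacle'': how to make the charges for different surviving edges disjoint. As written, your plan is not a proof. The phrase ``the half-integrality of $x$ together with a short-cycle / ear argument lets me route through fresh vertices'' is a hope, not an argument, and it is not at all clear how to carry it out when several surviving edges at $s$ go into the same component of $G'-X-s$ and the relevant cycles share many vertices. Charging ``$\ge\tfrac12$ per edge'' only works if you can exhibit, for each such edge, a vertex it is uniquely responsible for, and your sketch never produces such a vertex.

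The paper closes exactly this gap with a much more direct device. It first proves that $x$ remains an $s$-cycle cover of the \emph{reduced} graph $G'$ (not just $G$); this is essential and you omit it. Then, for each $v_i$ with a single edge to $s$ in $G'$, it constructs one specific vertex $w_i$ with $x(w_i)=\tfrac12$: in the bridge case, any $\tfrac12$-valued vertex of the (necessarily non-tree) component of $G'-X-s$ containing $v_i$; in the non-bridge case, the \emph{first} $\tfrac12$-valued vertex on a path in $G'-X-s$ from $v_i$ to some other $v_j$. The disjointness you were struggling to engineer then follows from a one-line injectivity argument: if $w_i=w_j$ for $i\ne j$, concatenating the two walks gives an $s$-cycle of total $x$-value $\tfrac12$ in $G'$, contradicting that $x$ is an $s$-cycle cover of $G'$. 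Hence all $w_i$ are distinct, $|N_1|\le|x^{-1}(\tfrac12)|$, and $d_{G'}(s)=|N_1|+2|N_2|\le|x^{-1}(\tfrac12)|+2|x^{-1}(1)|=2x(V)$. So the missing idea in your proposal is precisely the explicit map $v_i\mapsto w_i$ and the collision-creates-a-cheap-$s$-cycle injectivity argument, together with the preliminary fact that $x$ covers $s$-cycles of $G'$.
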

\begin{proof}
First, we show that $x$ is also an $s$-cycle cover of $G'$.
Let us assume that there is an $s$-cycle $C$ of $G'$ such that $x(C)<1$.
Because $x(v)=1$ for $v\in X=x^{-1}(1)$, $C$ contains none of $X$.
Because all the inserted edges are incident to $X$, $C$ is also an $s$-cycle of $G$, which is a contradiction.

For $i\in\{1,2\}$, let $N_i$ denote the set of vertices that are connected to $s$ by edges of multiplicity $i$ in $G'$.
For each $v_i\in N_1$, we define a vertex $w_i$ as follows.

If the edge $sv_i$ is a bridge in $G'-X$, let $C_i$ be the connected component of $G'-X-s$ containing $v_i$.
Because the reduction removes all the bridges between $s$ and tree components, $C_i$ is not a tree.
Thus, there exists an $s$-cycle contained in $C_i\cup\{s\}$ and, therefore, there must exist a vertex $w_i\in C_i$ with $x(w_i)=\frac{1}{2}$.

If $sv_i$ is not a bridge in $G'-X$, there exists a path $P_i$ from $v_i$ to $N_1\setminus \{v_i\}$ in $G'-X-s$.
Fix an arbitrary path $P_i$ and let $w_i$ be the first vertex on the path such that $x(w_i)=\frac{1}{2}$.
Because $x$ is an $s$-cycle cover, there always exists such a vertex.

If $w_i=w_j$ holds for some $i\neq j$, there exists an $s$-cycle $C$ such that $x(C)=\frac{1}{2}$, which is a contradiction.
Therefore, all $w_i$ are distinct.
Thus, we have $d_{G'}(s)=|N_1|+2|N_2|\leq |x^{-1}(\frac{1}{2})|+2|x^{-1}(1)|= 2x(V)$.
\end{proof}

\begin{algorithm}[t]
\caption{Simple quadratic-size kernelization for \FVS}
\label{alg:simple}
\begin{algorithmic}[1]
\Procedure{Kernelize}{$G,k$}
	\While{true}
		\State Apply the basic reductions
		\If{$k<0$}
			\Return NO
		\EndIf
		\If{$n\leq 2k^2+k$ and $m\leq 4k^2$}
			\Return $(G,k)$
		\EndIf
		\If{$\forall v\in V, d(v)\leq 2k$}
			\Return NO\label{line:small_degree}
		\EndIf
		\State Pick a vertex $s$ of degree larger than $2k$
		\State Compute a half-integral minimum $s$-cycle cover $x$
		\If{$x(V)>k$}
			$G\gets G-s$; $k\gets k-1$\label{line:large_x}
		\Else{}
			apply the $s$-cycle cover reduction\label{line:cycle_cover_reduction}
		\EndIf
	\EndWhile
\EndProcedure
\end{algorithmic}
\end{algorithm}

Now, we describe our simple quadratic-size kernelization algorithm (see Algorithm~\ref{alg:simple}).
First, we apply the basic reduction.
If $k$ becomes negative, we return a NO instance.
If the graph becomes small enough, we return it.
If all the vertices have degree at most $2k$, we return a NO instance.
Otherwise, pick an arbitrary vertex $s$ of degree larger than $2k$, and compute a half-integral minimum $s$-cycle cover $x$.
If the size of the $s$-cycle cover is larger than $k$, we remove $s$ and decrement $k$.
Otherwise, we apply the $s$-cycle cover reduction.

\begin{lemma}
Algorithm~\ref{alg:simple} runs in $(k+m)^{O(1)}$ time and correctly computes $(G',k')$ satisfying $k'\leq k$ and $\mathrm{fvs}(G)\leq k\Leftrightarrow\mathrm{fvs}(G')\leq k'$.
The size of $G'$ is at most $2k^2+k$ vertices and $4k^2$ edges.
\end{lemma}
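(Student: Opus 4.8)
The plan is to verify the three promised properties of Algorithm~\ref{alg:simple} — correctness, running time, and output size — by analyzing the loop and appealing to the lemmas already proved.

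\textbf{Correctness.} First I would argue that each iteration of the while loop preserves the answer, i.e., the invariant $\mathrm{fvs}(G)\le k \Leftrightarrow \mathrm{fvs}(G')\le k'$ where $(G',k')$ is the instance after the iteration. The basic reductions preserve this by their (trivial) correctness, and they never increase any degree; after them the graph has no self-loops and minimum degree at least three, which is exactly the hypothesis needed for Lemma~\ref{lem:size}. If $k<0$, returning NO is correct since a feedback vertex set has nonnegative size. If we reach line~\ref{line:small_degree}, then $3\le d(v)\le 2k$ for all $v$ (the lower bound from the basic reductions, the upper bound from the branch condition), and we did not return at the previous line, so $n>2k^2+k$ or $m>4k^2$; applying Lemma~\ref{lem:size} with $d=2k$ (so $dk+k=2k^2+k$ and $2dk=4k^2$) gives $\mathrm{fvs}(G)>k$, so NO is correct. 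On line~\ref{line:large_x}: if $x(V)>k$, then by the first bullet of Lemma~\ref{lem:ksub} every feedback vertex set avoiding $s$ has size $>k$, so any feedback vertex set of size $\le k$ must contain $s$; hence $\mathrm{fvs}(G)\le k \Leftrightarrow \mathrm{fvs}(G-s)\le k-1$, which is the update performed. On line~\ref{line:cycle_cover_reduction}, Lemma~\ref{lem:reduction} gives $\mathrm{fvs}(G)=\mathrm{fvs}(G')$ with $k$ unchanged. In every branch $k'\le k$. Composing the invariant over all iterations yields the claimed equivalence for the returned instance.

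\textbf{Termination and running time.} I would show each iteration strictly decreases a simple potential, e.g.\ the number of edges $m$ (after the basic reductions), or lexicographically $(k, m)$ or $(k, \text{number of degree-}{>}2k\text{ vertices})$. Removing $s$ (line~\ref{line:large_x}) decreases $k$. For the $s$-cycle cover reduction, I need: it only inserts edges between $s$ and $X=x^{-1}(1)$ and deletes the bridge set $B\subseteq\delta(s)$; since $d_{G'}(s)\le 2x(V)\le 2k$ by Lemma~\ref{lem:degree_bound} while $d_G(s)>2k$, and since the basic reductions at the top of the next iteration do not increase degrees, the vertex $s$ is no longer a high-degree vertex, so the number of vertices of degree $>2k$ strictly drops (or, if one prefers an edge-count argument, one shows the reduction followed by re-running basic reductions strictly decreases $m$). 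Either way the loop runs at most $\mathrm{poly}(k,m)$ times; each iteration does basic reductions in $O(m)$ time and computes a half-integral minimum $s$-cycle cover in polynomial time (by Wahlstr\"om's ellipsoid-based procedure, cited after Lemma~\ref{lem:ksub}), so the total is $(k+m)^{O(1)}$.

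\textbf{Output size.} The algorithm returns an instance only at the line guarded by $n\le 2k'^2+k'$ and $m\le 4k'^2$ (after the basic reductions have been applied in that same iteration), which is exactly the claimed size bound; and returns NO otherwise, in which case the size bound is vacuous. So the size claim is immediate from the return condition.

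\textbf{Main obstacle.} The one genuinely nontrivial point is termination / the per-iteration progress measure: I must be careful that the $s$-cycle cover reduction does not cycle — it can both add and delete edges incident to $s$ and could in principle create new structure elsewhere. The clean way is to track the number of vertices of degree greater than $2k$: Lemma~\ref{lem:degree_bound} together with the fact that basic reductions never raise degrees guarantees $s$ permanently leaves that set after one reduction at $s$, and no other vertex's degree increases, so this count is a strictly decreasing nonnegative integer between two consecutive ``pick $s$'' steps, while the value of $k$ only ever decreases. Everything else is bookkeeping already licensed by Lemmas~\ref{lem:size}, \ref{lem:ksub}, \ref{lem:reduction}, and~\ref{lem:degree_bound}.
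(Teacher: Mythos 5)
Your correctness and output-size arguments are sound and follow the paper: you invoke Lemma~\ref{lem:size} with $d=2k$ for line~\ref{line:small_degree}, the first bullet of Lemma~\ref{lem:ksub} for line~\ref{line:large_x}, and Lemma~\ref{lem:reduction} for line~\ref{line:cycle_cover_reduction}, and the return guard immediately gives the size bound. However, your preferred termination argument contains a genuine error. You assert that after the $s$-cycle cover reduction ``no other vertex's degree increases,'' so the count of vertices of degree $>2k$ strictly drops. This is false: the reduction inserts a double edge between $s$ and every $v\in X=x^{-1}(1)$, and none of the deleted edges $B\subseteq\delta(s)$ are incident to $X$ (they are bridges of $G-X$ going to tree components of $G-X-s$), so every $v\in X$ gains up to two units of degree. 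A vertex $v\in X$ with $d(v)\in\{2k-1,2k\}$ beforehand can therefore enter the high-degree set even as $s$ leaves it, and your proposed potential (even lexicographically paired with $k$) need not decrease.

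The edge-count alternative you mention in passing is the argument the paper actually uses, and it is the one that works cleanly: every edge inserted or deleted by the $s$-cycle cover reduction is incident to $s$, and by Lemma~\ref{lem:degree_bound} the degree of $s$ drops from $>2k$ to $\le 2k$, so $m$ strictly decreases; the basic reductions and the $s$-removal branch also only remove edges. Hence the loop executes at most $m$ times, and since each iteration is polynomial in $k$ and $m$ (basic reductions in $O(m)$, the $s$-cycle cover via the cited ellipsoid method), the total is $(k+m)^{O(1)}$. You should promote this to the main termination argument and drop the high-degree-vertex count.
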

\begin{proof}
It obviously holds that $k'\leq k$ and the size of $G'$ is at most $2k^2+k$ vertices and $4k^2$ edges.
From Lemma~\ref{lem:degree_bound}, after applying the $s$-cycle cover reduction,
the degree of $s$ changes from more than $2k$ to at most $2k$.
Therefore, the number of edges strictly decreases for each iteration.
Thus, it stops in at most $m$ iterations.
Because each iteration can be done in time polynomial in $k$ and $m$, the total running time is also polynomial in $k$ and $m$.

Next, we show the correctness.
By applying Lemma~\ref{lem:size} against $d=2k$, when the maximum degree is at most $2k$ and at least one of $n>2k^2+k$ and $m>4k^2$ holds,
$\mathrm{fvs}(G)>k$ holds.
Thus, we can safely return a NO instance (line~\ref{line:small_degree}).
From Lemma~\ref{lem:ksub}, if $x(V)>k$, there is no feedback vertex set of size at most $k$ that does not contain $s$.
Thus, we can safely remove $s$ (line~\ref{line:large_x}).
The correctness of the $s$-cycle cover reduction follows from Lemma~\ref{lem:reduction}.
\end{proof}

\section{Efficient Computation of a Half-integral Minimum s-cycle Cover}\label{sec:flower}

In this section, we prove the following theorem.
\begin{theorem}\label{thm:main2}
Given a graph $G=(V,E)$ without self-loops, a vertex $s\in V$, and an integer $k$, in $O(km)$ time,
we can compute a half-integral minimum $s$-cycle cover or conclude that there are no $s$-cycle covers of size at most $\frac{k}{2}$.
\end{theorem}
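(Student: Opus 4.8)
The plan is to design a max-flow–style augmenting-path algorithm that maintains a half-integral $s$-cycle cover $x$ together with a dual certificate — a half-integral packing of $s$-cycles that are ``tight'' against $x$ — and to increase the dual objective by one half-unit per augmentation. Concretely, I would think of the dual of the $k$-submodular relaxation as a (half-integral, U-turn-free) $s$-cycle packing: a collection of $s$-cycles, each taken with weight $\frac12$ or $1$, such that every vertex $v\neq s$ receives total weight at most $1$, and each edge is used at most twice across the packing counted with weights (so an $s$-cycle using an edge twice is only allowed at weight $\frac12$). LP duality / complementary slackness then says that $x(V)$ equals the maximum total weight of such a packing, and that $x(v)>0$ only at vertices saturated by the packing. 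The algorithm maintains such a packing of total weight $t/2$ and at each step either finds an augmenting structure raising it to $(t+1)/2$, or certifies optimality by exhibiting the matching $s$-cycle cover; it halts and reports ``no cover of size $\le k/2$'' once $t$ would exceed $k$, which bounds the number of augmentations by $O(k)$.

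The core routine is a single augmentation in $O(m)$ time. Here I would build on the combinatorial theory of (group-labelled) $A$-path packing alluded to in the introduction: orient the search from $s$, and grow an alternating forest that alternates between ``free'' edges and edges already carrying packing weight, being careful that the label/parity bookkeeping forbids U-turns (this is exactly the obstruction the paper flags — a naive vertex-copying reduction cannot distinguish the valid cycle $s$–$a$–$b$–$c$–$d$–$a$–$s$ from the invalid $s$–$a$–$b$–$c$–$b$–$a$–$s$). An augmenting walk is then a closed $s$-walk from $s$ back to $s$ along which the residual capacities (vertex capacities $1-2x(v)$ rounded into half-units, edge capacities measured against the ``used at most twice'' bound, and the no-U-turn constraint) all permit pushing an extra $\frac12$; pushing along it either creates a new half-weight $s$-cycle or promotes/combines existing ones, and simultaneously updates $x$ on the newly saturated vertices. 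Because a single BFS/DFS over $G$ with $O(1)$ work per edge suffices to either find such a walk or conclude none exists, one augmentation costs $O(m)$, giving $O(km)$ overall.

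The two things that need real care — and the main obstacle — are (i) formulating the residual graph so that ``no augmenting walk'' genuinely certifies that the current packing is maximum and the associated $x$ is a feasible minimum $s$-cycle cover, i.e. a clean max-flow/min-cut (LP complementary-slackness) statement for this half-integral U-turn-free setting; and (ii) handling the half-integrality and the edge-used-twice subtlety consistently during augmentation, since a vertex or edge may be ``half used,'' and promoting a half-weight cycle to weight $1$ versus splicing two half-weight cycles together are genuinely different update operations. I would isolate these in a structural lemma characterizing optimal packings (a Gallai–Edmonds / deficiency-type description adapted from the group-labelled $A$-path literature, e.g.\ the results behind~\cite{DBLP:journals/siamdm/Yamaguchi16}), prove that each augmentation preserves the invariant ``$x$ is the minimum among covers compatible with the current dual support,'' and then read off half-integrality of the final $x$ from the fact that every update keeps $x$ in $\{0,\frac12,1\}$. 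Once the single-augmentation lemma and its optimality certificate are in place, the theorem follows by running at most $k+1$ augmentations and invoking Lemma~\ref{lem:ksub} to know a half-integral minimum exists at all.
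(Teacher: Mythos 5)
Your high-level plan is aligned with the paper's: maintain a half-integral dual packing, repeatedly find an augmenting structure in a single $O(m)$ BFS/DFS, increase the dual by at least one half-unit per step, and stop after $O(k)$ steps with either a maximum packing plus a matching minimum cover or a conclusion that none of size $\le k/2$ exists. You also correctly identify the two genuinely hard parts: a usable residual structure whose ``no augmenting walk'' state certifies optimality, and a consistent bookkeeping for half-used edges versus promoting/splicing half-weight cycles. The problem is that you defer exactly these two parts, and they \emph{are} the proof. The paper's contribution is not the outline you give but a concrete local representation that makes both parts tractable: a \emph{basic $s$-cycle packing} is an edge-labelling $f:E\to\{0,\frac12,1\}$ subject to a purely local degree constraint at each vertex (the four types O, I, H, T) plus one parity constraint on each half-integral cycle, and an \emph{$f$-augmenting walk} is defined by conditions on $f$-values of consecutive edges that a single pass can test. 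This representation is what converts ``find an augmenting structure'' into a BFS with $O(1)$ work per edge, and it is what lets the no-walk case yield an explicit $s$-cycle cover $x$ (by reading off, for each value-one cycle through $s$, the unique vertex whose $\prev$-edge leaves that cycle). A Gallai--Edmonds-style structural lemma imported from the group-labelled $A$-path literature is not what the paper does, and it is not obviously obtainable in $O(km)$ time; the paper instead proves the max-packing = min-cover certificate directly from properties of the $\prev$-table (Lemmas~\ref{lem:property_prev} and~\ref{lem:dual}).

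There are also two concrete mismatches. First, you describe the augmenting structure as ``a closed $s$-walk from $s$ back to $s$.'' In the paper's scheme this is only one of the cases: an augmenting walk may also \emph{end at a type-H or type-T vertex} — i.e.\ on an existing half-integral cycle — which increases the packing by $\frac12$ rather than $1$ and requires a nontrivial splice (the paper's Cases~2 and~3 alternately zero out and saturate segments of that cycle between its type-T vertices). Restricting to closed $s$-walks loses these half-steps; to recover them you would have to extend the walk through the half-integral cycle back to $s$ along value-one paths, at which point you have effectively re-invented the type-H/T case but with a harder correctness argument. Second, you propose to invoke Lemma~\ref{lem:ksub} to know a half-integral minimum exists, but this theorem does not rely on Lemma~\ref{lem:ksub}: the whole point is that the augmenting-path algorithm constructively produces the half-integral minimum cover as its own optimality certificate, giving a combinatorial (ellipsoid-free) proof of half-integrality. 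As written, your proposal is a correct plan for how one might discover the result, but it does not contain the substance of the proof — the local edge-labelling representation, the exact augmenting-walk conditions, or the direct dual-certificate construction — so it cannot stand in for it.
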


First, we give several definitions.
Let $\mathcal{C}_s$ denote the set of all $s$-cycles.
A function $y:\mathcal{C}_s\rightarrow\mathbb{R}$ is called an \emph{$s$-cycle packing} if 
for any vertex $v\in V\setminus\{s\}$, it holds that $\sum_{C\in \mathcal{C}_s} \mathbf{1}_{V(C)}(v) y(C)\leq 1$.
The \emph{size} of an $s$-cycle packing $y$ is defined as $y(\mathcal{C}_s)$, and when the size $y(\mathcal{C}_s)$ is the maximum among all the possible
$s$-cycle packings, it is called a \emph{maximum $s$-cycle packing}.
Because the problem of finding a maximum $s$-cycle packing is the LP dual of the problem of finding a minimum $s$-cycle cover,
the size of the minimum $s$-cycle cover is equal to the size of the maximum $s$-cycle packing.
Thus, if we can find a pair of an $s$-cycle cover $x$ and an $s$-cycle packing $y$ of the same size,
we can confirm that $x$ is a minimum $s$-cycle cover and $y$ is a maximum $s$-cycle packing.

\begin{definition}\label{def:basic}
A function $f:E\rightarrow\{0,\frac{1}{2},1\}$ is called a \emph{basic $s$-cycle packing} if it satisfies the following three conditions.
\begin{enumerate}
  \setlength{\parskip}{0.05cm}
  \setlength{\itemsep}{0.05cm}
  \item For any $e\in\delta(s)$, $f(e)\in\{0,1\}$.
  \item Each vertex $v\in V\setminus\{s\}$ satisfies exactly one of the following four conditions (see Figure~\ref{fig:types}):
  \begin{enumerate}
    \item $f(e)=0$ for all edges $e\in\delta(v)$ (called type-O);
    \item $f(e)=1$ for exactly two edges $e\in\delta(v)$ and $f(e)=\frac{1}{2}$ for none of $e\in\delta(v)$ (called type-I);
    \item $f(e)=1$ for none of $e\in\delta(v)$ and $f(e)=\frac{1}{2}$ for exactly two edges $e\in\delta(v)$ (called type-H);
    \item $f(e)=1$ for exactly one edge $e\in\delta(v)$ and $f(e)=\frac{1}{2}$ for exactly two edges $e\in\delta(v)$ (called type-T).
  \end{enumerate}
  \item For each vertex $v\in V\setminus\{s\}$ of type-H or type-T, the cycle obtained by following edges of value $\frac{1}{2}$ from $v$
  contains an odd number of type-T vertices.
\end{enumerate}
\end{definition}

We call the cycle in the third condition the \emph{half-integral cycle of $v$}.
The size of a basic $s$-cycle packing $f$ is defined as $\frac{1}{2}f(\delta(s))$.
Figure~\ref{fig:basic} illustrates an example of the basic $s$-cycle packing, where solid lines denote edges of value $1$, and dotted lines denote edges of value $\frac{1}{2}$.

\begin{figure}[t]
  \begin{minipage}{0.49\hsize}
	  \centering
	  \includegraphics[scale=0.5]{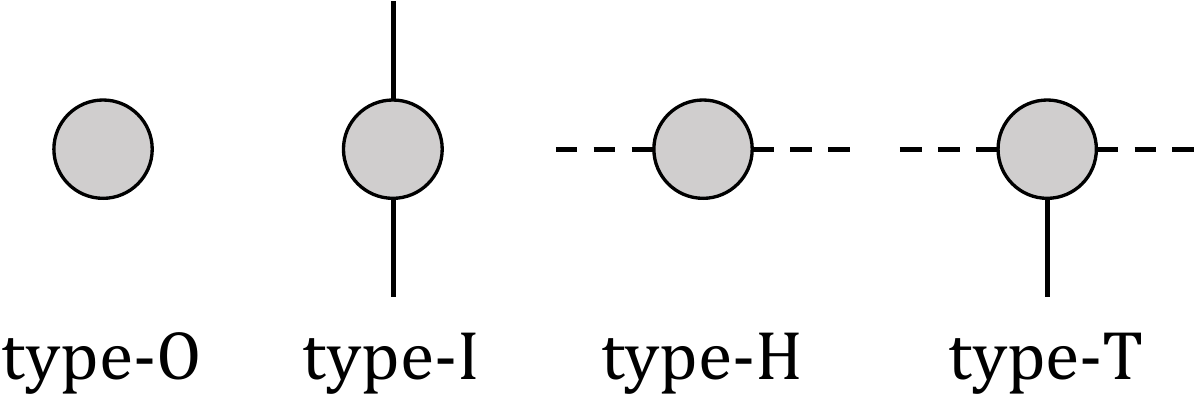}
	  \caption{Four types of vertices}
	  \label{fig:types}
  \end{minipage}
  \begin{minipage}{0.49\hsize}
	  \centering
	  \includegraphics[scale=0.8]{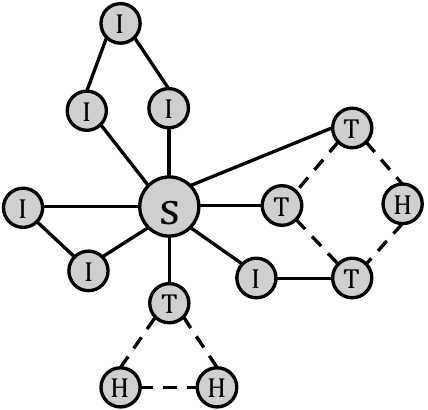}
	  \caption{Example of the basic $s$-cycle packing}
	  \label{fig:basic}
  \end{minipage}
\end{figure}

\begin{lemma}\label{lem:basic_atmost}
If there exists a basic $s$-cycle packing of size $k$, there also exists an $s$-cycle packing of size $k$.
\end{lemma}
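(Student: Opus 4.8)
The plan is to show that a basic $s$-cycle packing $f$ of size $k$ decomposes into a genuine $s$-cycle packing $y:\mathcal{C}_s\to\mathbb{R}_{\geq 0}$ of the same size. Concretely, I want to produce cycles so that integral edges (value $1$) are each used by weight $1$, half-integral edges (value $\frac12$) are each used by weight $\frac12$ (counted with multiplicity, so an edge may be traversed twice at weight $\frac14$ or once at weight $\frac12$ — here I expect each half-integral edge to be used exactly once in each of two half-weight cycles, or twice in one), and the total weight of cycles through any vertex $v\ne s$ is at most $1$. The size bookkeeping then gives $y(\mathcal{C}_s)=\tfrac12 f(\delta(s))=k$, since $s$ has degree-weight $2k$ split among the cycles.

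First I would handle the integral part. The edges with $f(e)=1$ induce, at each type-I vertex, exactly two such edges, at each type-T vertex exactly one, and at $s$ an even number $f(\delta(s))$. So the subgraph of value-$1$ edges together with the half-integral structure naturally routes into arcs: I would trace, starting from an edge of $\delta(s)$ with $f(e)=1$, a maximal walk using value-$1$ edges through type-I vertices until I either return to $s$ (yielding an integral $s$-cycle, assigned $y=1$) or reach a type-T vertex, where the walk must enter the half-integral cycle of that vertex. This pairs up the value-$1$ edges at $s$ into a set of walks; each walk is an $s$-path whose interior is value-$1$ edges and which, if it does not close up integrally, ends at a type-T vertex.

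Next, the half-integral cycles. Condition 3 of Definition~\ref{def:basic} says each half-integral cycle $Z$ (a cycle of value-$\frac12$ edges) contains an odd number of type-T vertices. The type-H vertices on $Z$ have all their incident weight inside $Z$, while each type-T vertex on $Z$ is also an endpoint of one of the integral walks above. Since the number of type-T vertices on $Z$ is odd, I can "splice" the walks arriving at $Z$: go around $Z$ starting and ending the traversal at the type-T vertices, so that each of the two orientations of $Z$ gets used to connect consecutive type-T attachment points, combining with the incoming integral walks to close up $s$-cycles of weight $\frac12$. Because the number of type-T vertices is odd, a standard parity argument (think of the type-T vertices as cutting $Z$ into an odd number of arcs, each arc used at weight $\frac12$ in one of the two "rounds") lets every value-$\frac12$ edge be covered with total weight exactly $\frac12$, and every type-H vertex on $Z$ gets total weight $\frac12+\frac12=1$, while I must also check no type-T vertex is overloaded — it carries weight $1$ from its single integral edge plus the arcs of $Z$ incident to it, but the two half-edges at a type-T vertex belong to two different arcs, contributing $\frac12$ total, giving weight $1$ at a type-T vertex, oh wait, that is $1+\frac12$; I need the splicing arranged so each type-T vertex is an endpoint (not an interior point) of the spliced cycles, so its two half-edges lie in two different weight-$\frac12$ cycles but it is the joining vertex, contributing exactly $\frac12$ to each, hence weight $1$ total once combined with the integral edge. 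So the odd-parity condition is exactly what makes this splicing consistent: with an even number of type-T vertices the arcs could not be two-colored properly. I also need to verify the resulting walks are valid $s$-cycles: no U-turns (guaranteed since we always continue along a different edge — at type-T vertices the integral edge and the two half-edges are three distinct edges, and at $s$ we stop), interior vertices $\ne s$ (by construction we stop the first time we hit $s$), and each edge used at most twice (value-$\frac12$ edges at most twice, value-$1$ edges exactly once in the integral walks).

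The main obstacle I anticipate is precisely the splicing at half-integral cycles and the global consistency of weights: showing that the local type constraints (conditions 2 and 3) force a globally valid decomposition, in particular that the odd-parity condition on type-T vertices is both necessary and sufficient to route the weight-$\frac12$ flow around each half-integral cycle without creating U-turns or exceeding capacity at any vertex — and, when several half-integral cycles are chained together through shared type-T vertices via integral walks, that the whole system still closes up into $s$-cycles rather than open walks. Once the decomposition is in hand, the size equality $y(\mathcal{C}_s)=\tfrac12 f(\delta(s))=k$ is immediate, and the capacity constraint $\sum_C \mathbf{1}_{V(C)}(v)\,y(C)\le 1$ for all $v\ne s$ follows from the case check over the four vertex types.
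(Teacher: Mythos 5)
Your overall plan — trace the value-one edges from $s$, then splice walks together around the half-integral cycles at weight $\tfrac12$ — is the right shape and essentially matches the paper's proof. However, there is a genuine gap that you flag at the end as an "obstacle I anticipate" but do not resolve, and it is precisely the nontrivial step.

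The gap is your assertion that \emph{each type-T vertex on a half-integral cycle $Z$ is an endpoint of one of the integral walks traced from $s$}. This is false in general. The value-one subgraph is a disjoint union of cycles (through $s$ and type-I vertices) and paths whose endpoints are type-T vertices or $s$; some of those paths connect two type-T vertices to \emph{each other} without touching $s$. Tracing only from $s$ never visits such a path, so its two type-T endpoints are stranded, and your splicing around $Z$ has arcs ending at type-T vertices that cannot be routed back to $s$. You hint at this when you worry about "several half-integral cycles chained together through shared type-T vertices via integral walks," but you never say how to close those chains. The paper resolves this with a small preprocessing step you are missing: for every type-T vertex, follow its value-one path; if that path ends at another type-T vertex rather than at $s$, \emph{set all its edges to zero}. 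This costs nothing (the size $\tfrac12 f(\delta(s))$ is unchanged, since no edge at $s$ is touched), preserves conditions 1 and 2 of Definition~\ref{def:basic} (the two endpoints become type-H, the interior type-I vertices become type-O), and guarantees afterward that every remaining type-T vertex connects to $s$ by a value-one path $P_i$. Only then does the splicing go through: for a half-integral cycle with type-T vertices $t_0,\dots,t_{q-1}$ (in cyclic order, each now with a path $P_i$ to $s$), form the $q$ half-weight $s$-cycles $C_i = P_i + (\text{arc }t_i\!\to\!t_{i+1}) + P_{i+1}$, zero out all edges used, and repeat.

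A second, lesser issue: you place the odd-parity condition (condition 3 of Definition~\ref{def:basic}) at the center of the argument, claiming it is what makes the two-coloring of arcs consistent. In fact the paper's decomposition does not use condition 3 at all — indeed the preprocessing step above may well break it — and it works for any $q\geq 1$, odd or even, since each arc of $Z$ appears in exactly one $C_i$ at weight $\tfrac12$ and each vertex of $Z$ in exactly two (or, when $q=1$, twice in the single $C_0$). Condition 3 is needed elsewhere in the paper (notably to make the augmentation in Lemma~\ref{lem:augment} well-defined), not here. So leaning on parity is a distraction that would have sent you looking for a two-coloring argument that the construction does not actually require.
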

\begin{proof}
Given a basic $s$-cycle packing $f$, we construct an $s$-cycle packing $y$ of the same size as follows.
Initially, set $y(C)=0$ for all $s$-cycles $C\in\mathcal{C}_s$.
First, we modify $f$ as follows.
For each type-T vertex $v\in V\setminus\{s\}$, compute a path by following edges of value one from $v$.
If it reaches to $s$, we do nothing.
Otherwise, it reaches to another type-T vertex $u$.
In this case, set $f(e)=0$ for all edges $e$ on the path.
This modification may break the third condition of Definition~\ref{def:basic};
however, it still preserves the first two conditions
($v$ and $u$ change from type-T to type-H and the other vertices on the path change from type-I to type-O)
and does not change the size of $f$.

While $f(\delta(s))>0$, we repeat the following process.
By following edges of value one from $s$, we obtain a (simple) $s$-cycle $C$ or a path from $s$ to a type-T vertex $v$.
In the former case, set $f(e)=0$ for all edges $e$ on $C$ and set $y(C)=1$.
This modification preserves the first two conditions for $f$ (all the vertices on $C$ become type-O).
Because the cycle $C$ is simple, $y$ remains an $s$-cycle packing after the modification.
In the latter case, let $C$ be the half-integral cycle of $v$ and let $\{t_0,\ldots,t_{q-1}\}$ be the type-T vertices on the cycle $C$ in order (the direction is chosen arbitrary from the two).
Observe that each vertex $t_i$ is connected to $s$ by a path $P_i$ consisting of edges of value one.
We use the notation $t_q=t_0$ and $P_q=P_0$.
For each $i\in\{0,\ldots,q-1\}$, let $C_i$ be an $s$-cycle obtained by concatenating the path $P_i$, the path from $t_i$ to $t_{i+1}$ along the cycle $C$, and the path $P_{i+1}$
(when $q=1$, this creates an $s$-cycle obtained by concatenating the path $P_0$, the cycle $C$, and the path $P_0$ again).
For each $s$-cycle $C_i$, set $f(e)=0$ for all edges $e$ on $C_i$ and set $y(C_i)=\frac{1}{2}$.
This modification preserves the first two conditions for $f$ (all the vertices on $C$ or $P_i$'s become type-O).
Because each vertex $v\in V\setminus\{s\}$ is contained in at most two of $C_i$'s, $y$ remains an $s$-cycle packing after the modification.

During the repetition, the sum $\frac{1}{2}f(\delta(s))+y(\mathcal{C}_s)$ does not change.
Thus, when $f(\delta(s))$ becomes zero, the size of $y$ becomes $k$.
\end{proof}

Note that this lemma only says that the size of the maximum basic $s$-cycle packing is always at most the size of the
maximum $s$-cycle packing and does not imply these two are equal;
there might exist an $s$-cycle packing whose size is strictly larger than the size of any basic $s$-cycle packing.
The equality is shown at the end of this section.

\begin{definition}\label{def:augmenting}
For a basic $s$-cycle packing $f$, a walk $W=(v_0,e_1,\ldots,v_l)$ is called an \emph{$f$-augmenting walk} if it satisfies all the following conditions.
\begin{enumerate}
  \setlength{\parskip}{0.05cm}
  \setlength{\itemsep}{0.05cm}
  \item We have $v_0=s$.
  \item We have $f(e_1)=0$.
  \item All the edges $\{e_1,\ldots,e_l\}$ are distinct.
  \item The vertices $\{v_0,\ldots,v_{l-1}\}$ are distinct (the last vertex $v_l$ can be identical to $v_i$ for some $i<l$).
  \item For each $i\in\{1,\ldots,l-1\}$, exactly one of the following holds:
  \begin{enumerate}
    \item $v_i$ is type-O;
    \item $v_i$ is type-I and $f(e)=1$ holds for at least one of $e\in\{e_i, e_{i+1}\}$.\label{cond:typeI}
  \end{enumerate}
  \item If $v_l=v_i$ for some $i<l$, exactly one of the following holds:
  \begin{enumerate}
    \item $v_l=s$ and $f(e_l)=0$;
    \item $v_l$ is type-O;
    \item $v_l$ is type-I and $f(e)=1$ holds for at least one of $e\in\{e_i,e_l\}$.
  \end{enumerate}
  \item If $v_l\not\in\{v_0,\ldots,v_{l-1}\}$, $v_l$ is type-H or type-T.
\end{enumerate}
\end{definition}

For a basic $s$-cycle packing $f$ and an $f$-augmenting walk $W=(v_0,e_1,\ldots,v_l)$, let $f_W:E\rightarrow\{0,\frac{1}{2},1\}$ be a function defined as follows.
First, set $f_W(e)=f(e)$ for all edges $e\in E$.
If $v_l\neq s$ and $v_l=v_i$ holds for some $i<l$, let $h=i$; otherwise, let $h=l$.
Then, for each edge $e\in\{e_1,\ldots,e_h\}$, set $f_W(e)=1-f(e)$.
If $v_l=s$, we finish (see Figure~\ref{fig:augments}-(a)).
Otherwise, we further modify $f_W$ depending on the type of $v_l$.

(Case 1) If $v_l$ is type-O or type-I, for each edge $e\in\{e_{h+1},\ldots,e_l\}$, set $f_W(e)=\frac{1}{2}$ (see Figure~\ref{fig:augments}-(b)).

(Case 2) If $v_l$ is type-H, let $C$ be the half-integral cycle of $v_l$ and let $\{t_0=v_l,t_1,\ldots,t_{q-1}\}$
be the vertex set consisting of the vertex $v_l$ and the type-T vertices on $C$ ordered along $C$
(i.e., $v_l$ is located on the path from $t_{q-1}$ to $t_1$ along the cycle $C$).
We use the notation $t_q=t_0$.
Let $P_i$ be the path from $t_i$ to $t_{i+1}$ along the cycle $C$.
For each even $i$, set $f_W(e)=0$ for all the edges on $P_i$,
and for each odd $i$, set $f_W(e)=1$ for all the edges on $P_i$ (see Figure~\ref{fig:augments}-(c)).

(Case 3) If $v_l$ is type-T, let $C$ be the half-integral cycle of $v_l$ and let $\{t_0=v_l,t_1,\ldots,t_{q-1}\}$
be the type-T vertices on $C$ ordered along $C$.
Then, we proceed in exactly the same way as in case 2 (see Figure~\ref{fig:augments}-(d)).
We note that, in case 2, $q$ is even; thus, $v_l$ is connected to $t_{q-1}$ by edges of value one in $f_W$.
On the other hand, in case 3, $q$ is odd; thus, $v_l$ is connected to none of $t_1$ and $t_{q-1}$.

\begin{figure}[t]
  \centering
  \includegraphics[scale=0.55]{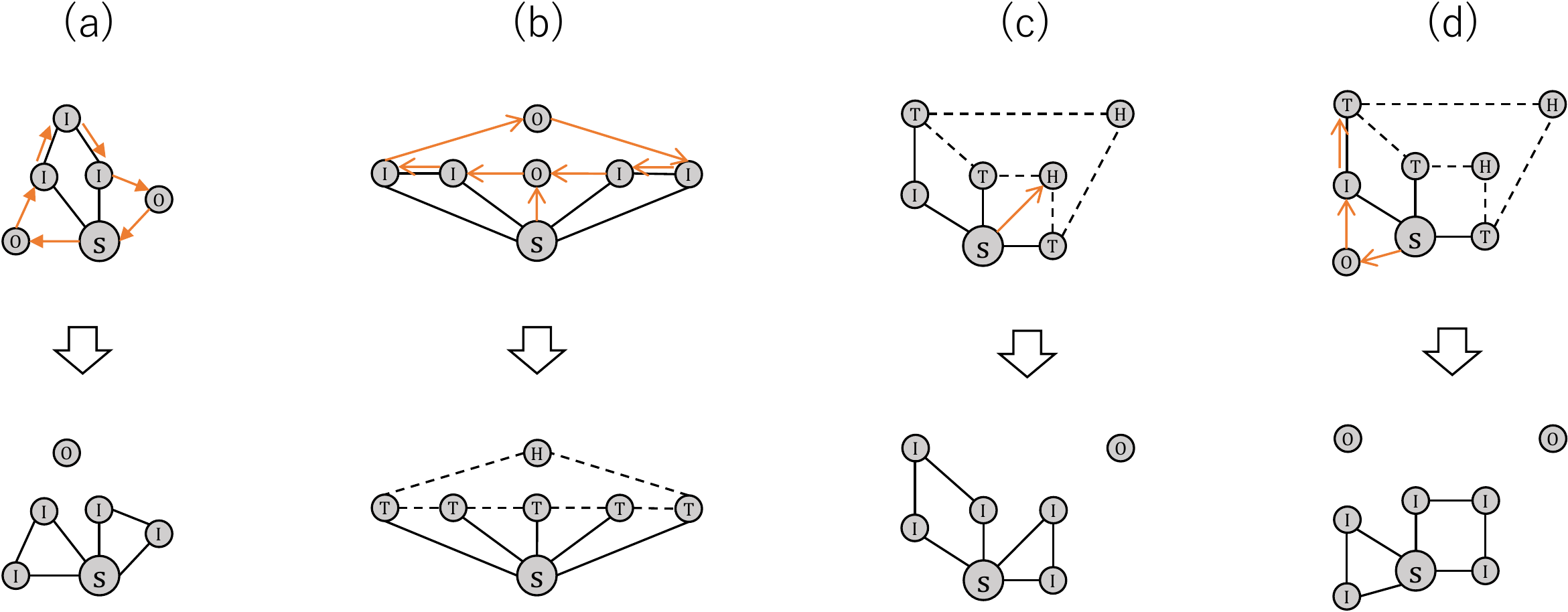}
  \caption{Examples of augmenting walks}
  \label{fig:augments}
\end{figure}

We call this operation that creates $f_W$ from $f$ as \emph{augmenting $f$ along $W$}.

\begin{lemma}\label{lem:augment}
For a basic $s$-cycle packing $f$ and an $f$-augmenting walk $W=(v_0,e_1,\ldots,v_l)$, let $f_W:E\rightarrow\{0,\frac{1}{2},1\}$
be the function obtained by augmenting $f$ along $W$.
Then, $f_W$ is a basic $s$-cycle packing.
Moreover, if $v_l=s$, the size of $f_W$ is the size of $f$ plus one; and otherwise, the size of $f_W$ is the size of $f$ plus $\frac{1}{2}$.
\end{lemma}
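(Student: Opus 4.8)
The plan is to verify that $f_W$ satisfies the three defining conditions of a basic $s$-cycle packing (Definition~\ref{def:basic}) and then read off the size change, organizing everything around the case split on the endpoint $v_l$ used in the construction of $f_W$ (namely: $v_l=s$; $v_l$ a revisited type-O/type-I vertex; $v_l$ a new type-H vertex; $v_l$ a new type-T vertex). Before the case analysis I would record two structural facts that keep the bookkeeping finite. First, since every interior vertex $v_1,\dots,v_{l-1}$ of $W$ is type-O or type-I (Definition~\ref{def:augmenting}) and $s$ touches only edges of value $0$ or $1$, every edge $e_1,\dots,e_l$ of $W$ already has $f$-value in $\{0,1\}$, and $f(e_1)=0$. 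Second, the half-integral cycle of any type-H/type-T vertex of $f$ consists of $\frac12$-edges only, hence avoids $s$ and every type-O/type-I vertex, so it is vertex-disjoint and edge-disjoint from $W$ except possibly at the endpoint $v_l$. From these it is immediate that the only edges of $\delta(s)$ whose value changes are $e_1$ (from $0$ to $1$, always) and, when $v_l=s$, also $e_l$ (from $0$ to $1$, by the requirement $f(e_l)=0$ in the last-vertex condition of Definition~\ref{def:augmenting}); this already gives condition~1, and since $f(\delta(s))$ grows by $2$ when $v_l=s$ and by $1$ otherwise, it also gives the stated size change of $+1$ resp.\ $+\tfrac12$.

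For condition~2 I would partition the changed vertices and do a short type-by-type check in each group. The interior vertices of the flipped segment $e_1,\dots,e_h$ have both their walk-edges flipped: a type-O vertex becomes type-I, and a type-I vertex either stays type-I (with its two $1$-edges relabeled) or becomes type-O — the condition that an interior type-I vertex carries a $1$-edge among its two walk-edges is exactly what prevents the illegal ``one $1$-edge, no $\frac12$-edge'' outcome. In the revisit case, at $v_i=v_l$ the edge $e_i$ is flipped while $e_{i+1}$ and $e_l$ become $\frac12$; invoking simultaneously the interior condition at position $i$ and the closing condition at $v_l$, one checks that $v_i$ becomes type-H or type-T. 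The interior vertices of the tail loop $v_{i+1},\dots,v_{l-1}$ get both walk-edges set to $\frac12$ and hence become type-H or type-T. Finally, in the type-H/type-T endpoint cases, the vertices of the half-integral cycle $C$ of $v_l$ are recolored by the alternating paths $P_0,\dots,P_{q-1}$: the old type-T vertices of $C$ are precisely the junctions of consecutive (oppositely coloured) $P_j$'s, so each keeps its off-$C$ $1$-edge and gains exactly one more, becoming type-I; the old type-H vertices lie in the interior of some $P_j$ and become type-O or type-I; and $v_l$ itself, after the flip of $e_l$ and the recoloring, becomes type-O or type-I. Consistency of the alternating coloring around $C$ uses the parity of type-T vertices on $C$ guaranteed by condition~3 of Definition~\ref{def:basic} applied to $v_l$, which also explains why $q$ is even in the type-H case and odd in the type-T case.

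For condition~3, note that whenever $v_l=s$ or $v_l$ is type-H/type-T no new $\frac12$-edge is created (the flip keeps values in $\{0,1\}$ and the recoloring of $C$ only destroys $\frac12$-edges), and by the disjointness fact above every surviving half-integral cycle is an old one other than $C$, untouched by the augmentation, so its count of type-T vertices is preserved. In the revisit case the new $\frac12$-edges are exactly $e_{i+1},\dots,e_l$, which form one simple cycle $L$ on $v_i,\dots,v_{l-1}$; $L$ is disjoint from all old half-integral cycles (its edges had value $0$ or $1$ before), so those are again untouched, and the only thing left is to show $L$ carries an odd number of type-T vertices. For this I would use a parity argument: walking around $L$, a vertex $v_j$ with $i<j\le l-1$ is type-T in $f_W$ iff its two $L$-edges $e_j,e_{j+1}$ had distinct $f$-values, so the number of such vertices equals the number of transitions of the $0/1$ sequence $f(e_{i+1}),\dots,f(e_l)$ at non-wraparound positions, which is $\equiv [\,f(e_{i+1})\ne f(e_l)\,]\pmod 2$ since a cyclic binary sequence has evenly many transitions; a direct inspection of the few local configurations at $v_i=v_l$ then shows $v_i$ is type-T exactly when $f(e_{i+1})=f(e_l)$, so the total is always odd.

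I expect the parity count on the newly born cycle $L$ in the revisit case to be the crux of the proof: it forces one to combine the ``evenly many transitions in a cyclic binary sequence'' observation with a careful, somewhat tedious accounting at the single irregular junction $v_i=v_l$, whose local picture branches on whether $f(e_i)=0$ or $1$ and on which of $e_{i+1},e_l$ carries an original $1$-edge of $v_i$. Everything else is a mechanical, finite type-by-type verification that the two structural facts recorded at the outset keep manageable.
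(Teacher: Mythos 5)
Your proposal is correct and follows the same structure as the paper's proof: a type-by-type verification of conditions 1 and 2 of Definition~\ref{def:basic} along the walk, together with a parity count of type-T vertices on the newly created half-integral cycle to establish condition 3. The only cosmetic difference is in that parity step, where you count transitions of the cyclic binary sequence $f(e_{h+1}),\dots,f(e_l)$ and invoke that this count is even, while the paper counts segments $\alpha$ of consecutive value-one edges and observes that each segment contributes two type-T endpoints; these are equivalent observations.
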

\begin{proof}
First, we show the size of $f_W$.
When $v_l=s$, only the edges $e_1$ and $e_l$ are incident to $s$.
Because it holds that $f(e_1)=f(e_l)=0$, $f_W(e_1)=f_W(e_l)=1$, and $f(e)=f_W(e)$ for all the other edges $e\in\delta(s)$, the size of $f_W$ is the size of $f$ plus one.
When $v_l\neq s$, only the edge $e_1$ is incident to $s$.
Because it holds that $f(e_1)=0$, $f_W(e_1)=1$, and $f(e)=f_W(e)$ for all the other edges $e\in\delta(s)$, the size of $f_W$ is the size of $f$ plus~$\frac{1}{2}$.

Next, we prove that $f_W$ is a basic $s$-cycle packing.
For each edge $e\in\delta(s)$, one of $f_W(e)=f(e)$ or $f_W(e)=1-f(e)$ holds.
Thus, the first condition of Definition~\ref{def:basic} is satisfied.

If $v_l\neq s$ and $v_l=v_i$ holds for some $i<l$, let $h=i$; otherwise, let $h=l$.
We show that $f_W$ satisfies the second condition for each vertex $v\in\{v_1,\ldots,v_{h-1}\}$.
From condition 5 of Definition~\ref{def:augmenting}, $v$ must be type-O or type-I in $f$.
If $v$ is type-O in $f$, it changes to type-I in $f_W$.
If $v$ is type-I in $f$, at least one of $f(e_i)=1$ or $f(e_{i+1})=1$ holds from condition 5(b).
If $f(e_i)=f(e_{i+1})=1$, $v$ changes to type-O in $f_W$; otherwise, it remains type-I.

When $v_l=s$, this concludes the proof.
Let us consider the case where $v_l$ is type-O or type-I in $f$.
If $v_l$ is type-O in $f$, it changes to type-T in $f_W$.
If $v_l$ is type-I in $f$, from condition 5(b) and 6(c) of Definition~\ref{def:augmenting}, there are three possibilities:
(1) $f(e_h)=1$ and $\{f(e_{h+1}), f(e_l)\}=\{0,1\}$;
(2) $f(e_h)=1$ and $f(e_{h+1})=f(e_l)=0$; or
(3) $f(e_h)=0$ and $f(e_{h+1})=f(e_l)=1$.
In the first case, $v_l$ changes to type-H in $f_W$, and in the other two cases, it changes to type-T.
From condition 5 of Definition~\ref{def:augmenting}, each vertex $v\in\{v_{h+1},\ldots,v_{l-1}\}$ must be type-O or type-I in $f$.
If $v_i$ is type-O in $f$, it changes to type-H in $f_W$.
If $v_i$ is type-I in $f$, there are two possibilities: $f(e_i)=f(e_{i+1})=1$ or $\{f(e_i),f(e_{i+1})\}=\{0,1\}$.
In the former case, $v_l$ changes to type-H in $f_W$, and in the latter case, it changes to type-T.

Now, let us count the number of type-T vertices in the created half-integral cycle $(v_h,e_{h+1},\ldots,e_l,v_l=v_h)$.
On this cycle, let $\alpha$ be the number of segments of consecutive edges of value one in $f$.
Remember that a vertex $v_i\in\{v_{h+1},\ldots,v_{l-1}\}$ becomes type-T in $f_W$ if and only if it satisfies $\{f(e_i),f(e_{i+1})\}=\{0,1\}$;
in other words, vertices on endpoints of consecutive edges of value one become type-T in $f_W$.
If $f(e_h)=1$ and $\{f(e_{h+1}),f(e_l)\}=\{0,1\}$, $v_l$ becomes type-H in $f_W$ and $2\alpha-1$ vertices in $\{v_{h+1},\ldots,v_{l-1}\}$ become type-T in $f_W$.
Otherwise, $v_l$ becomes type-T in $f_W$ and $2\alpha$ vertices in $\{v_{h+1},\ldots,v_{l-1}\}$ become type-T in $f_W$.
Thus, in both cases, the third condition of Definition~\ref{def:basic} is satisfied.

Finally, let us consider the case where $v_l$ is type-H or type-T in $f$.
If $v_l$ is type-T and $f(e_l)=1$, it changes to type-O in $f_W$.
Otherwise, it changes to type-I in $f_W$.
The other vertices on the half-integral cycle of $v_l$ changes from type-T to type-I, and from type-H to type-O or type-I.
\end{proof}

Now, we give an algorithm to compute an $f$-augmenting walk (see Algorithm~\ref{alg:augment_search}).
First, we initialize a set $S$ and a table $\prev:V\rightarrow E\cup\{\epsilon\}$.
The set $S$ stores vertices we need to process and initialized to $\{s\}$.
We ensure that only the vertex $s$ and vertices of type-O or type-I are stored in $S$.
The table $\prev(v)$ represents an edge to the parent of $v$ in the search tree and initialized to the dummy edge $\epsilon$,
which indicates that the vertex is not visited (or the vertex is the root $s$).
Then, while $S$ is not empty, pick up an arbitrary vertex $u$ from $S$ and process each incident edge $e=uv\in\delta(u)$ as described below.
If $S$ becomes empty, the algorithm returns NO.

First, we check whether the edge $e=uv$ is valid by testing the following three conditions.
If $e=\prev(u)$, because we have already processed this edge, we skip it.
If $e$ is incident to $s$ and $f(e)=1$, because such an edge cannot be used in an augmenting walk, we skip it.
Note that, when $v=s$, at least one of these two conditions are satisfied.
Similarly, if $u$ is type-I and both of $f(\prev(u))$ and $f(e)$ are zero, because we cannot use both of $\prev(u)$ and $e$ simultaneously,
we skip it.

If $v$ is type-H, or type-T, we return the walk from $s$ to $v$ in the search tree by using the table $\prev$.
If $\prev(v)=\epsilon$, we set $\prev(v)=e$ and insert it to $S$.
If $v$ is already visited and $v$ is type-O, let $w$ be the lowest common ancestor of $u$ and $v$ in the search tree.
Then, we return the walk obtained by going down from $s$ to $u$ along the search tree, jumping from $u$ to $v$ by the edge $e$,
and then by going up from $v$ to $w$ along the search tree.
If $v$ is already visited and $v$ is type-I, we basically do the same; however, we need one additional constraint.
If both of $f(\prev(v))$ and $f(e)$ are zero, the walk created as above does not satisfy the condition 6(c) of Definition~\ref{def:augmenting}; thus, we skip the edge without returning the walk.

\begin{algorithm}[t]
\caption{Algorithm for computing an $f$-augmenting walk}
\label{alg:augment_search}
\begin{algorithmic}[1]
\Procedure{FindAugmentingWalk}{$G,s,f$}
	\State $S\gets\{s\}$
	\State $\prev(v)=\epsilon$ for all $v\in V$
	\While{$S\neq\emptyset$}
		\State Pick a vertex $u\in S$ and remove $u$ from $S$
		\For{$e=uv\in\delta(u)$}
			\If{$e=\prev(u)$}
				\textbf{continue}
			\EndIf
			\If{$e\in\delta(s)$ and $f(e)=1$}
				\textbf{continue}
			\EndIf
			\If{$u$ is type-I and $f(\prev(u))=f(e)=0$}
				\textbf{continue}
			\EndIf
			\If{$v$ is type-H or type-T}
				\State $\prev(v)\gets e$
				\State \Return the walk from $s$ to $v$ along the search tree\label{line:augment_search:walk}
			\ElsIf{$\prev(v)=\epsilon$}
				\State $\prev(v)\gets e$; $S\gets S\cup\{v\}$
			\ElsIf{$v$ is type-O or $f(\prev(v))+f(e)\geq 1$}
				\State \Return the walk $s\rightarrow u\rightarrow v\rightarrow w$ along the search tree\label{line:augment_search:walk2}
			\EndIf
		\EndFor
	\EndWhile
	\State \Return NO
\EndProcedure
\end{algorithmic}
\end{algorithm}

From the construction of our algorithm, we obtain the following lemma.
\begin{lemma}
A walk returned by Algorithm~\ref{alg:augment_search} is an $f$-augmenting walk.
\end{lemma}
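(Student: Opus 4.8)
I would argue that any walk returned on line~\ref{line:augment_search:walk} or line~\ref{line:augment_search:walk2} satisfies all seven conditions of Definition~\ref{def:augmenting}, by reading off the invariants maintained by the search loop. The crucial structural fact is that \texttt{prev} defines a search \emph{tree} rooted at $s$: every vertex inserted into $S$ has its \texttt{prev} set exactly once (the $\prev(v)=\epsilon$ test guards re-insertion), and only $s$ and type-O/type-I vertices are ever inserted, while type-H and type-T vertices are never inserted — they only occur as the final vertex $v_l$. So the tree path from $s$ to any explored vertex visits $s$ and then only type-O/type-I vertices internally, which is exactly the content of condition~5.

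\textbf{Key steps.} First I would establish the tree invariant and the "edge-validity" invariant: the three \texttt{continue} guards ensure that whenever an edge $e=uv$ is actually used to extend or close a walk, (i) $e\neq\prev(u)$, (ii) $e$ is not an $s$-incident edge with $f(e)=1$, and (iii) if $u$ is type-I then $f(\prev(u))+f(e)\geq 1$. Combined with the fact that $\prev(s)=\epsilon$ so the first edge $e_1$ out of $s$ has $f(e_1)=0$ (it passed guard (ii)), this gives conditions~1 and~2 immediately, and condition~3 (distinct edges) follows because on the tree path each edge appears once and the closing edge $e$ is distinct from all of them — for the line~\ref{line:augment_search:walk2} case, the down-path $s\to u$, the jump $e$, and the up-path $v\to w$ share no edges since $u$ and $v$ lie in different subtrees below their LCA $w$. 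Condition~4 (distinct vertices among $v_0,\dots,v_{l-1}$) is where the lowest-common-ancestor construction matters: the walk $s\to u\to v\to w$ repeats only the vertex $w=v_l$, which is permitted. Then condition~5 is the tree invariant above, applied twice: once to the internal vertices of the $s\to u$ segment, and once to the internal vertices of the $v\to w$ up-segment (all type-O or type-I, and at a type-I internal vertex $v_i$ the two incident walk-edges are $\prev(v_i)$ and the child edge, and guard (iii) applied when $v_i$ was processed gives $f=1$ on at least one of them). Condition~6 splits by the three return cases: the $v_l=s$ case is handled by guard (ii) forcing $f(e_l)=0$; the "$v_l$ type-O" case is the line~\ref{line:augment_search:walk2} branch condition; the "$v_l$ type-I" case is exactly the $f(\prev(v))+f(e)\geq 1$ disjunct in that same branch condition, which is why the algorithm skips type-I revisits that fail it. Condition~7 is the line~\ref{line:augment_search:walk} branch: a genuinely new final vertex (not on the current path) is returned only when it is type-H or type-T, and conversely when $v_l$ is type-H/type-T we always take that branch rather than inserting it into $S$.

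\textbf{Main obstacle.} The delicate point is verifying condition~5(b)/condition~6(c) on the \emph{up-segment} of a line~\ref{line:augment_search:walk2} walk — i.e. that a type-I vertex $v_i$ encountered while climbing from $v$ to $w$ still has $f=1$ on one of its two walk-incident edges. On the up-segment the walk traverses edges in the opposite orientation to how the search discovered them, so one must check that "the child edge and $\prev(v_i)$" really are the two consecutive walk-edges at $v_i$ and invoke guard (iii) from the moment $v_i$ was dequeued. I would also need to double check the boundary vertices $u$ and $v$ themselves: $u=v_{h}$ in the $v_l=s$ case or an interior vertex otherwise, and for $v$ when it is type-I we must confirm $f(\prev(v))+f(e)\geq 1$ is precisely condition~6(c). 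These are all local case checks with no real computation, so once the tree and validity invariants are stated cleanly the rest is bookkeeping; the proof in the paper is presumably exactly this case analysis, which is why it is stated as following "from the construction."
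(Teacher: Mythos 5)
The paper states this lemma without proof, asserting only that it follows ``from the construction of our algorithm,'' so your job was to write the proof the paper omits, and you have done so correctly: the tree invariant (\texttt{prev} defines a search tree; only $s$ and type-O/type-I vertices are ever enqueued), the observation that the three \texttt{continue} guards exactly encode the local edge constraints, and the LCA-based decomposition of the returned walk into a down-segment, a jump edge, and an up-segment are precisely what the omitted argument must be, and your treatment of the up-segment (guard~(iii) was already applied to the edge $\prev(v_i)$ when $v_i$ was dequeued) is the genuinely nontrivial point.

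One small imprecision worth flagging: in the line~\ref{line:augment_search:walk2} return, the final vertex of the walk is the LCA $w$, not the revisited vertex $v$. The branch test ``$v$ is type-O or $f(\prev(v))+f(e)\geq 1$'' therefore verifies condition~5(b) for the \emph{interior} vertex $v$ when $v\neq w$, and verifies condition~6(c) only in the degenerate case $v=w$ (i.e.\ $v$ is an ancestor of $u$). When $v\neq w$, condition~6 for $v_l=w$ is not checked by that branch test at all; it is instead guaranteed by the same mechanism you use for condition~5 on the up-segment, namely that guard~(iii) was applied at $w$ to the tree edge $\prev(q)$ where $q$ is the child of $w$ on the path toward $v$, and guard~(ii) was applied if $w=s$. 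This is exactly the argument you already give for the interior up-segment vertices, so there is no missing idea, only a misattribution of which condition the explicit branch test is enforcing; the paper's own prose around the algorithm has the same slip. Also, your parenthetical ``$u=v_h$ in the $v_l=s$ case'' is off: when $v_l=s$ the definition of $f_W$ takes $h=l$, and $v_h=s\neq u$. Neither point affects the soundness of the overall argument.
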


Note that this lemma does not say that Algorithm~\ref{alg:augment_search} returns an $f$-augmenting walk whenever there exists an $f$-augmenting walk;
it only says that if the algorithm returns a walk, it is an $f$-augmenting walk, and the algorithm might return NO even when there exists an $f$-augmenting walk.
We now show that, if the algorithm returns NO, we can construct an $s$-cycle cover $x$ whose size is equal to the size of $f$.
From Lemma~\ref{lem:basic_atmost} and the LP duality of $s$-cycle packings and $s$-cycle covers, the size of a basic $s$-cycle packing is always at most the size of an $s$-cycle cover.
Therefore, this equality implies that the current basic $s$-cycle packing $f$ is the maximum and the constructed $s$-cycle packing $x$ is the minimum.
This also implies that when the algorithm returns NO, there are no $f$-augmenting walks.

To construct such an $s$-cycle cover $x$, we first prove a property of the table $\prev$.
We call a vertex $v\in V$ \emph{reachable} if $v=s$ or $\prev(v)\neq\epsilon$.
For each edge $e\in\delta(s)$ with $f(e)=1$, by following edges of value 1 from $e$, we can obtain a simple cycle returning to $s$ or a simple path to a type-T vertex.
We denote such a cycle or a path by $W_e$.
Note that when $W_e$ is a cycle, $W_e=W_{e'}$ for another edge $e'\in\delta(s)$.

\begin{lemma}\label{lem:property_prev}
If Algorithm~\ref{alg:augment_search} returns NO, exactly one of the following holds for each edge $e\in\delta(s)$ with $f(e)=1$:
\begin{enumerate}
  \item $\prev(v)=\epsilon$ for any vertex $v\in V(W_e)$;
  \item $W_e$ is a cycle, all the vertices on $W_e$ are reachable, and exactly one vertex $v\in V(W_e)\setminus\{s\}$ satisfies $\prev(v)\not\in E(W_e)$.
\end{enumerate}
\end{lemma}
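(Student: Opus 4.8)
The plan is to analyze what it means for Algorithm~\ref{alg:augment_search} to terminate with NO, using the structure of the edges of value $1$ incident to $s$. Fix an edge $e\in\delta(s)$ with $f(e)=1$, and consider the walk $W_e$ obtained by greedily following edges of value $1$. By the definition of a basic $s$-cycle packing (Definition~\ref{def:basic}), every vertex visited along $W_e$ has exactly one or two incident edges of value $1$: type-I and type-T vertices have exactly two (in the type-T case, one edge of value $1$ and two of value $\frac12$ — wait, type-T has exactly one edge of value $1$), so in fact $W_e$ is forced to continue at each internal vertex until it either returns to $s$ (closing a cycle through edges of value $1$ only) or reaches a type-T vertex, whose unique value-$1$ edge is the one just used. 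This is exactly the dichotomy ``$W_e$ is a cycle'' versus ``$W_e$ is a path to a type-T vertex'' already noted in the excerpt, so the two cases in the statement are well-posed.

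First I would dispose of the case where $W_e$ ends at a type-T vertex $t$. I claim $t$ is not reachable, i.e. $\prev(t)=\epsilon$: if $\prev(t)\neq\epsilon$ then at the moment $\prev(t)$ was set to some edge $e'$, the algorithm would have detected that $t$ is type-T (the branch \texttt{if $v$ is type-H or type-T}) and returned the walk from $s$ to $t$ along the search tree on line~\ref{line:augment_search:walk}, contradicting that the algorithm returned NO. Now propagate backwards along $W_e$: if any vertex $v$ on $W_e$ other than $t$ were reachable, take the one closest to $t$ along $W_e$; its successor $v'$ on $W_e$ is unreachable, yet when $v$ was removed from $S$ and its incident edges processed, the edge $vv'$ of value $1$ would have been examined. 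It cannot have been skipped by any of the three \texttt{continue} tests (it is not $\prev(v)$ unless it points the other way along $W_e$, which we can handle; it is incident to $s$ with value $1$ only if $v=s$; and if $v$ is type-I with $\prev(v)$ of value $0$, note $f(vv')=1$ so the test fails). Hence $\prev(v')$ would have been set — contradiction. So in this case no vertex of $W_e$ is reachable, which is alternative~1. A subtlety to check is the case $v=s$ and the edge being $e$ itself, and the case where $\prev(v)$ happens to be the value-$1$ edge of $W_e$ on the other side; in the latter the search came ``down'' $W_e$ and I would argue it must then have continued, again reaching $v'$.

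Now suppose $W_e$ is a cycle. If no vertex of $W_e$ is reachable we are in alternative~1, so assume some vertex is reachable; I want to show alternative~2 holds, i.e. all of $V(W_e)$ is reachable and exactly one non-$s$ vertex $v$ has $\prev(v)\notin E(W_e)$. For reachability, the same backward-propagation argument as above shows that once one vertex of $W_e$ is reachable the search cannot ``stop'' along the cycle: from any reachable $v$ on $W_e$ the value-$1$ edges to both neighbors on $W_e$ survive the \texttt{continue} tests, so both neighbors become reachable, and since $W_e$ is connected all of $V(W_e)$ is reachable. For the uniqueness count: the vertices of $V(W_e)\setminus\{s\}$ together with their $\prev$-edges form a subforest of the search tree, and there are $|V(W_e)\setminus\{s\}|$ such edges; the edges of $E(W_e)$ not incident to $s$ number $|V(W_e)\setminus\{s\}|-1$. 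The set $\{v\in V(W_e)\setminus\{s\} : \prev(v)\in E(W_e)\}$ picks out, for each such $v$, one of the (at most two) $W_e$-edges at $v$, and no $W_e$-edge can be chosen by both its endpoints (that would make $\prev$ cyclic); so this is a matching of vertices into the $|V(W_e)\setminus\{s\}|-1$ available edges, forcing at least one vertex to have $\prev(v)\notin E(W_e)$. The main obstacle — and the part I expect to require the most care — is showing there is \emph{at most} one such vertex. For this I would argue that the set $R=\{v\in V(W_e)\setminus\{s\}:\prev(v)\in E(W_e)\}$ actually induces, via the chosen $W_e$-edges, a single path in the search tree hanging off $s$ or off a single ``entry'' vertex: intuitively the search enters the cycle $W_e$ at one place and then propagates along value-$1$ edges in both directions around $W_e$, and because the algorithm returned NO it never ``wrapped around'' and closed up. I would formalize this by noting that $W_e\setminus\{s\}$ is a path $v_1,\dots,v_p$ (cutting the cycle at $s$), and that along this path the $\prev$-pointers can only point toward a single vertex $v_j$ — if two vertices $v_a$ and $v_b$ both had $\prev$ pointing ``outward'' off $W_e$, a lowest-common-ancestor / type-O-or-type-I case analysis as on line~\ref{line:augment_search:walk2} would have produced a returned augmenting walk between them, contradicting the NO output. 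Getting that case analysis exactly right — in particular handling the type-I skip condition ``$f(\prev(v))+f(e)\ge 1$'' and confirming it is satisfied because the relevant edges lie on $W_e$ and have value $1$ — is the crux of the argument. With that in hand, $R$ is an interval of the path $v_1,\dots,v_p$ oriented toward $v_j$, leaving exactly the one vertex $v_j$ (or its cycle-neighbor, depending on which side $s$ sits) with $\prev\notin E(W_e)$, giving alternative~2 and completing the dichotomy.
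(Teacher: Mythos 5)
Your overall structure mirrors the paper's: reduce to the cycle-vs-path dichotomy for $W_e$, use a propagation argument to show that once any non-$s$ vertex of $W_e$ is reachable the whole of $W_e$ is (which in the path case immediately hands the algorithm a type-T vertex and a returned walk at line~\ref{line:augment_search:walk}, killing that case), and finish with a counting argument on $\prev$. The paper resolves the ``exactly one'' count with a single crisp observation that you circle around but do not quite state: writing $W_e=(v_0=s,e_1,\ldots,v_l=s)$, for every internal edge $e_i$ with $2\le i\le l-1$ we must have $\prev(v_{i-1})=e_i$ or $\prev(v_i)=e_i$, because otherwise, at the moment the later of $v_{i-1},v_i$ was pulled from $S$ and $e_i$ examined, none of the three skip tests fire (in particular the type-I test fails since $f(e_i)=1$), and since the neighbor is already visited and type-I with $f(\prev(\cdot))+f(e_i)\ge 1$, the algorithm returns at line~\ref{line:augment_search:walk2}. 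From this, $\{e_2,\ldots,e_{l-1}\}\subseteq\{\prev(v_1),\ldots,\prev(v_{l-1})\}$; since the $\prev$ values of distinct vertices are distinct edges and $\prev(v)$ can never be $e_1$ or $e_l$ (both are $f$-value-$1$ edges incident to $s$, unconditionally skipped), exactly one $v_i$ has $\prev(v_i)\notin E(W_e)$.

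The gap in your write-up is precisely in the ``at most one'' direction: the sentence about an LCA case analysis between two arbitrary outward-pointing vertices $v_a,v_b$ is not the right mechanism, because line~\ref{line:augment_search:walk2} fires only when a \emph{single graph edge} between two visited vertices is examined, and $v_a,v_b$ need not be adjacent. What you need is the edge-level claim above applied to the $W_e$-edge that ends up being nobody's $\prev$ on the segment between $v_a$ and $v_b$ (such an edge exists by the pigeonhole you already set up). You flag this as ``the crux'' and correctly identify that the type-I condition $f(\prev(v))+f(e)\ge 1$ is what makes it go through because $f(e)=1$ on $W_e$; once you phrase it as the per-edge statement rather than a two-vertex LCA statement, both ``at least one'' and ``at most one'' fall out of the same count, which is exactly how the paper closes the argument. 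Your ``at least one'' counting and your handling of the $v=s$, $e$-itself subtlety in the path case are correct.
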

\begin{proof}
Let us assume that $\prev(v)\neq\epsilon$ holds for at least one of $v\in V(W_e)$.
If a type-I vertex is reachable, the algorithm makes its adjacent type-I vertices connected by edges of value one reachable.
Thus, all the vertices on $W_e$ are reachable.
If $W_e$ is a path, the algorithm makes the type-T endpoint of the path reachable, and therefore the algorithm returns an $f$-augmenting walk at line~\ref{line:augment_search:walk}.
Thus, $W_e$ must be a cycle.
Let $W_e=(v_0=s,e_1,\ldots,v_l=s)$ be the cycle.
If there exists an integer $i\in\{2,\ldots,l-1\}$ such that $\prev(v_{i-1})\neq e_i$ and $\prev(v_i)\neq e_i$, the algorithm returns an $f$-augmenting walk at line~\ref{line:augment_search:walk2}.
Thus, the edges $\{e_2,\ldots,e_{l-1}\}$ must be contained in the set $\{\prev(v_1),\ldots,\prev(v_{l-1})\}$.
Because all the $e_i$'s are distinct, this implies that exactly one of $i\in\{1,\ldots,l-1\}$ satisfies $\prev(v_i)\not\in \{e_2,\ldots,e_{l-1}\}$.
Because $\prev(v)$ cannot be $e_1$ nor $e_l$, we have $\prev(v_i)\not\in E(W_e)$.
\end{proof}

When Algorithm~\ref{alg:augment_search} returns NO, by using the obtained table $\prev$, we construct a function $x: V\rightarrow\{0,\frac{1}{2},1\}$ as follows.
For each edge $e=su\in\delta(s)$ with $f(e)=1$, if $W_e$ is a cycle satisfying the second condition of Lemma~\ref{lem:property_prev},
we set $x(v)=1$ for the unique vertex $v$ satisfying $\prev(v)\not\in E(W_e)$.
Otherwise, we set $x(u)=\frac{1}{2}$.
If $x(u)$ is already set to $\frac{1}{2}$, e.g., $W_{e_1}=W_{e_2}=(s,e_1,u,e_2,s)$ for a double edge $\{e_1,e_2\}$, we set $x(u)=1$.

\begin{lemma}\label{lem:dual}
If Algorithm~\ref{alg:augment_search} returns NO, the function $x$ is a minimum $s$-cycle cover.
\end{lemma}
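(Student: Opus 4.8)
The plan is to establish three properties of the constructed function $x$ and then invoke duality: (i) $x(s)=0$; (ii) $x(V)$ equals $\tfrac12 f(\delta(s))$, the size of $f$; and (iii) $x(V(C))\ge 1$ for every $s$-cycle $C$. Property (i) is immediate, since every vertex that receives a positive value is either the non-$s$ endpoint of some edge $e\in\delta(s)$ with $f(e)=1$ or a special vertex lying strictly inside some $W_e$, in both cases distinct from $s$. For (ii) I would follow the construction edge by edge, grouping the edges $e=su\in\delta(s)$ with $f(e)=1$ by the dichotomy of Lemma~\ref{lem:property_prev}. If $W_e$ is a reachable cycle it uses exactly two edges of $\delta(s)$, and both of them place weight $1$ on the single special vertex of $W_e$. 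If $W_e$ is entirely unreachable, then when it is a path its one edge in $\delta(s)$ places weight $\tfrac12$ on $u$, and when it is a cycle its two edges in $\delta(s)$ place weight $\tfrac12$ on two distinct $s$-neighbours, or weight $1$ on a single one in the double-edge case. Thus every value-$1$ edge at $s$ contributes exactly $\tfrac12$ to $x(V)$. One then checks that no vertex is charged twice across groups: every $\tfrac12$-charged vertex is unreachable while the special vertex of a reachable cycle is reachable, and the two value-$1$ edges incident to a type-I vertex lie in one value-$1$ structure, so distinct reachable cycles have distinct special vertices. Hence $x(V)=\tfrac12 f(\delta(s))$. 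Since, by Lemma~\ref{lem:basic_atmost} and LP duality, the size of any $s$-cycle cover is at least the size of any basic $s$-cycle packing, property (iii) will then imply that $x$ is a minimum $s$-cycle cover.

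The heart of the matter is property (iii), which I would prove by contradiction: assume an $s$-cycle $C=(s=v_0,e_1,v_1,\dots,v_l=s)$ with $x(V(C))<1$ and derive a contradiction with the fact that the algorithm returned NO. The structural facts I would distil from such a run are: (a) no type-H or type-T vertex is reachable, for otherwise line~\ref{line:augment_search:walk} would have returned a walk; (b) if a reachable vertex is of type-O, or of type-I and was reached along an edge of value $1$, and the algorithm did not return while processing it, then all of its incident edges were examined and every neighbour of it is reachable and of type-O or type-I; and (c) a type-I vertex that is reachable along an edge of value $0$ is the unique special vertex of its value-$1$ structure, because its $\prev$-edge lies outside that structure, which is then a reachable cycle in the sense of Lemma~\ref{lem:property_prev}; such a vertex has $x$-value $1$. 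Fact (c) immediately rules out that $C$ visits a special vertex or a type-I vertex that is reachable along a value-$0$ edge, as any such vertex alone gives $x(V(C))\ge 1$. Consequently, along $C$ every vertex is $s$, is reachable only via value-$1$ edges, or is one of the $\tfrac12$-charged $s$-neighbours of a dead value-$1$ cycle through $s$; tracing $C$ and using (b) I would argue that $C$ stays inside the reachable set except where it passes through such a dead structure at $s$, in which case the relevant $\tfrac12$-charged endpoints already occur on $C$ and their weights sum to at least $1$.

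I expect the decisive and hardest case to be the one in which $C$ closes up only through value-$1$ edges at $s$. Otherwise $C$ has an edge that is neither a search-tree edge nor one skipped by the ``$e\in\delta(s)$ and $f(e)=1$'' rule -- the latter can only be $e_1$ or $e_l$, the only edges of $C$ incident to $s$ -- and then the algorithm examined that edge and, by (b) and (c), either returned at line~\ref{line:augment_search:walk2}, contradicting NO, or one of its endpoints is a type-I vertex reachable along a value-$0$ edge, which lies on $C$ and has $x$-value $1$, again a contradiction. Hence all edges of $C$ strictly between its two occurrences of $s$ are search-tree edges, which forces $v_1$ and $v_{l-1}$ to be joined by a path in the search tree that avoids $s$. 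Combining this with Lemma~\ref{lem:property_prev} applied to $W_{e_1}$ and $W_{e_l}$ (recalling that $s$ never becomes the $\prev$-parent of a vertex along a value-$1$ edge), I would locate on $C$ either a special vertex of one of these structures or a $\tfrac12$-charged $s$-neighbour, contradicting $x(V(C))<1$. The main obstacle is carrying out exactly this last argument: controlling how an $s$-cycle can thread the search tree between two value-$1$ edges at $s$ while dodging every $x$-positive vertex. The structural description of $\prev$ on the live value-$1$ cycles provided by Lemma~\ref{lem:property_prev} is precisely the tool that makes this possible.
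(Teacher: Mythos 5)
Your plan has the right overall shape and matches the paper's strategy: (i) confirm $x(s)=0$, (ii) equate $x(V)$ with $\tfrac12 f(\delta(s))$ using the uniqueness of the special vertex from Lemma~\ref{lem:property_prev}, (iii) verify the cover constraints, and then close with LP duality via Lemma~\ref{lem:basic_atmost}. Facts (a)--(c) are correct and are, in disguised form, exactly what the paper uses: (a) is why the dichotomy of Lemma~\ref{lem:property_prev} has only cycles and unreachable paths, (b) is how reachability propagates through type-O and value-$1$-entered type-I vertices, and (c) is the characterization that a reachable type-I vertex with a value-$0$ $\prev$-edge is the unique special vertex of its value-one cycle and so carries $x$-weight $1$.

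However, the step you flag as ``the decisive and hardest case'' is genuinely where the work is, and you haven't carried it through. Concretely, when $f(e_1)=f(e_l)=1$ and at least one of $W_{e_1}, W_{e_l}$ is a reachable cycle, it is not enough to say you ``would locate'' an $x$-positive vertex. The paper's proof handles this by explicitly walking along $C$: it starts inside the reachable value-one cycle $W_{e_1}$, locates the first index $a$ at which $C$ leaves $W_{e_1}$ (an edge $e_{a+1}$ with $f(e_{a+1})=0$), and then splits: if there is no such index then $W_{e_1}\subseteq C$ and the special vertex is already on $C$; if $\prev(v_a)\notin E(W_{e_1})$ then $v_a$ is the special vertex; otherwise $v_{a+1}$ is reachable via a value-$0$ edge and one reruns the Case-1 argument from there to find a reachable type-I vertex whose $\prev$ leaves its own value-one cycle. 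Your ``tracing $C$ through the reachable set'' sketch glosses over precisely this branch point and over the fact that $C$ may re-enter and leave several value-one structures. There is also a minor imprecision earlier: in the $f(e_1)=0$ case you argue that $C$ contains an edge that is neither a search-tree edge nor a skipped $\delta(s)$-edge and that the algorithm ``examined that edge,'' but the algorithm only examines edges incident to \emph{reachable} vertices, so you still need the propagation argument along $C$ to know the relevant endpoint was ever in $S$. The paper avoids this by never invoking ``a non-tree edge somewhere'' and instead always tracing from a known reachable starting point (either $v_1$ when $f(e_1)=0$, or the vertices of a reachable $W_{e_1}$). In short: same toolkit, same duality endgame, but the core $x(V(C))\ge 1$ verification is left as an intention rather than a proof, and that is exactly where the lemma's content lies.
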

\begin{proof}
First, we show that $x$ is an $s$-cycle cover.
Let $C=(v_0=s,e_1,\ldots,v_l=s)$ be an $s$-cycle.
If $f(e_1)=f(e_l)=1$ and both of $W_{e_1}$ and $W_{e_l}$ satisfy the first condition of Lemma~\ref{lem:property_prev}, we have $x(C)\geq \frac{1}{2}+\frac{1}{2}=1$.
Otherwise, there are two cases: (1) $f(e)=0$ for at least one of $e\in\{e_1,e_l\}$ or (2) $f(e_1)=f(e_l)=1$ and $W_e$ satisfies the second condition of Lemma~\ref{lem:property_prev} for at least one of $e\in\{e_1,e_l\}$.

(Case 1)
If $f(e_1)=0$ holds (the case of $f(e_l)=0$ is symmetric), the vertex $v_1$ is reachable.
If a vertex $v_i$ is reachable and type-O, the vertex $v_{i+1}$ is also reachable.
If all the vertices $\{v_1,\ldots,v_{l-1}\}$ are type-O, the algorithm returns an $f$-augmenting walk at line~\ref{line:augment_search:walk2},
and if there exists a reachable type-H or type-T vertex, the algorithm returns an $f$-augmenting walk at line~\ref{line:augment_search:walk}.
Therefore, there must exist a reachable type-I vertex on $C$.
Let $v_i$ be the first such vertex and let $W_e$ be the value-one cycle containing $v_i$.
Note that $v_i$ must be contained in a cycle because otherwise it is connected to a type-T vertex by edges of value one and this type-T vertex becomes reachable, which is a contradiction.
If $\prev(v_i)\in E(W_e)$, the algorithm returns an $f$-augmenting walk at line~\ref{line:augment_search:walk2}.
Thus, we have $x(C)\geq x(v_i)=1$.

(Case 2)
If $f(e_1)=f(e_l)=1$ and $W_{e_1}$ satisfies the second condition of Lemma~\ref{lem:property_prev} (the case of $W_{e_l}$ is symmetric), all the vertices on $W_{e_1}$ are reachable.
Let $v_a$ be the first vertex on $C$ satisfying $f(e_{a+1})=0$.
If there is no such vertex, $W_e$ is completely contained in $C$, and therefore, we have $x(C)\geq x(W_{e_1})=1$.
If $\prev(v_a)\not\in E(W_{e_1})$, we have $x(C)\geq x(v_a)=1$.
Otherwise, $v_{a+1}$ is reachable.
Thus, by the same argument as in case 1, there must exist a reachable type-I vertex $v_b$ on $C$ for $b>a$.
Let $W_{e'}$ be the value-one cycle containing $v_b$.
If $\prev(v_b)\in E(W_{e'})$, the algorithm returns an $f$-augmenting walk at line~\ref{line:augment_search:walk2}.
Thus, we have $x(C)\geq x(v_b)=1$.

From Lemma~\ref{lem:property_prev}, the vertex $v$ satisfying $\prev(v)\not\in E(W_e)$ is unique for each cycle $W_e$.
Therefore, the size of $x$ is $\frac{1}{2}f(\delta(s))$, which is equal to the size of $f$.
Thus, $x$ is a minimum $s$-cycle cover.
\end{proof}

\begin{proof}[Proof of Theorem~\ref{thm:main2}]
Because each augmentation increases the size of $f$ by at least $\frac{1}{2}$, after $k+1$ steps, we can obtain a half-integral minimum $s$-cycle cover of size at most $\frac{k}{2}$,
or conclude that there are no $s$-cycle covers of size at most $\frac{k}{2}$.
Because Algorithm~\ref{alg:augment_search} runs in $O(m)$ time, the total running time is $O(km)$.
\end{proof}

Figure~\ref{fig:execute} illustrates an example execution of the augmenting-path algorithm.
Solid orange-colored lines denote edges of value one and dotted orange-colored lines denote edges of value $\frac{1}{2}$.
In each step, the algorithm searches an augmenting walk, which is denoted by arrows in the figure, and augments along the obtained walk.
Finally, when the algorithm fails to find an augmenting walk (see the lower left figure), only vertices $\{s,1,2,4\}$ are reachable from $s$.
The $\prev$ of vertex $1$ is edge $1$--$2$, which is contained in the cycle $s$--$1$--$2$--$s$, and the $\prev$ of vertex $2$ is edge $4$--$2$, which is not contained in the cycle.
Thus, we can construct a function $x$ such that $x(2)=1$, $x(3)=x(5)=x(6)=x(7)=x(8)=\frac{1}{2}$, and $x(1)=x(4)=0$.
This is actually an $s$-cycle cover of the graph, and the size of $x$ is $\frac{7}{2}$, which is equal to the size of the constructed basic $s$-cycle packing.
Therefore, $x$ is the minimum $s$-cycle cover.
\begin{figure}[t]
  \centering
  \includegraphics[scale=0.6]{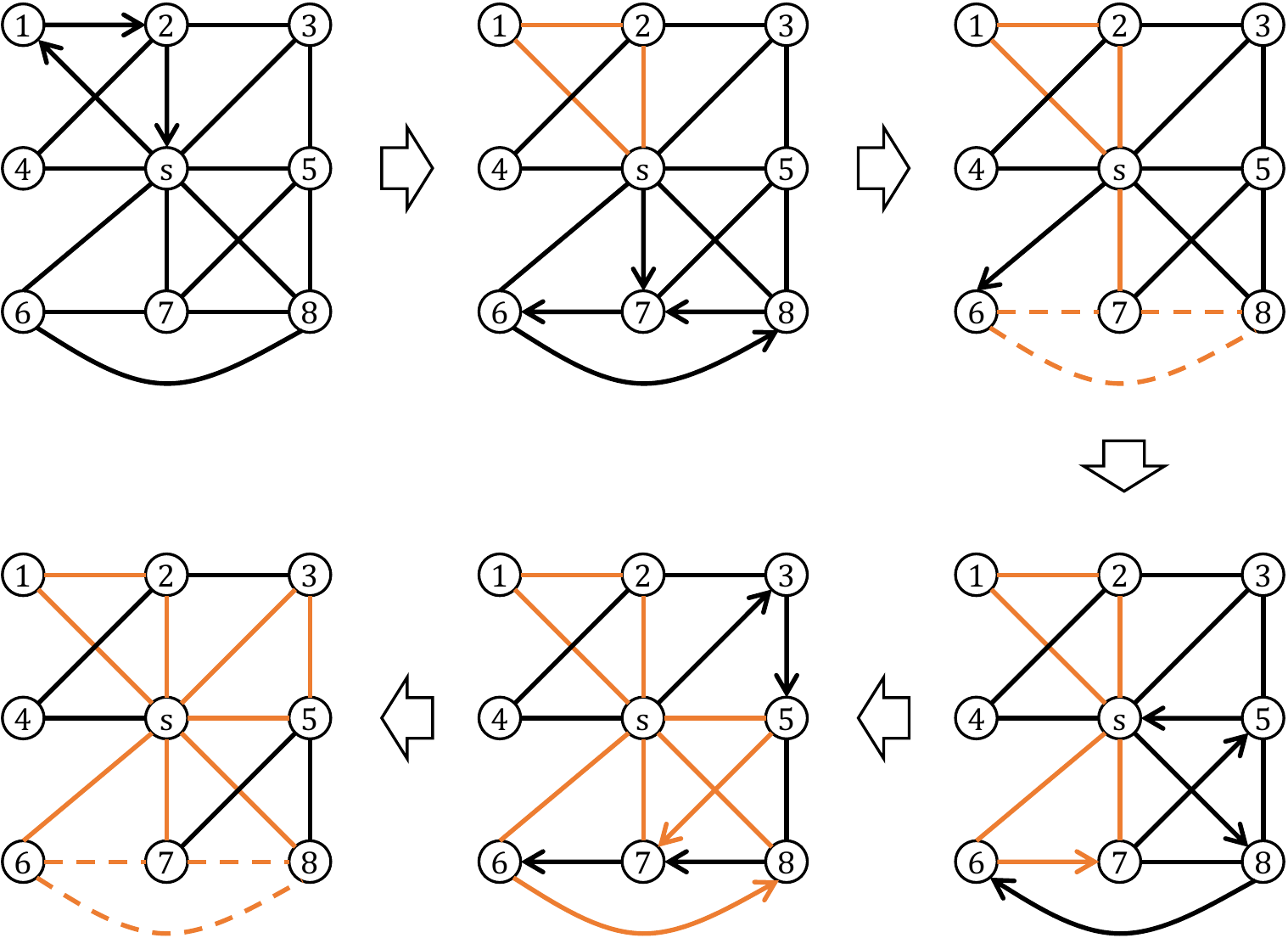}
  \caption{Example execution of the algorithm}
  \label{fig:execute}
\end{figure}

\section{Linear-time Quadratic-size Kernel}\label{sec:linear}

In this section, we improve the running time of the quadratic-size kernel presented in Section~\ref{sec:simple} to $O(k^4 m)$.
By using the $O(km)$-time algorithm for computing the minimum $s$-cycle cover presented in Section~\ref{sec:flower}, each iteration can be done in $O(km)$ time.
However, because the number of iterations is only bounded by $O(m)$, the total running time becomes $O(km^2)$.
We show that, by a slight modification to Algorithm~\ref{alg:simple}, the number of iteration can be bounded by $O(k^3)$;
thus, the total running time becomes $O(k^4 m)$.

We add the following two rules just after line~\ref{line:small_degree} of Algorithm~\ref{alg:simple}.
\begin{itemize}
  \item If there is a vertex $v$ incident to more than $k$ double edges, remove $v$, decrement $k$, and continue the iteration.
  \item If there are more than $k^2$ double edges, return NO.
\end{itemize}

The safeness of these two rules can be shown as follows.
Because any feedback vertex set must contain at least one of the two end points of a double edge,
if there is a vertex $v$ incident to more than $k$ double edges, it must be contained in any feedback vertex set of size at most $k$.
After applying this rule, each vertex can be incident to at most $k$ double edges.
Therefore, any feedback vertex set of size $k$ can delete at most $k^2$ double edges.
Thus, if there are more than $k^2$ double edges, there are no feedback vertex sets of size at most $k$.

For bounding the number of iterations, we use the following lemma.
\begin{lemma}\label{lem:no_double}
For a graph $G=(V,E)$ of minimum degree at least three, a vertex $s\in V$, and a half-integral minimum $s$-cycle cover $x$, if $2x(V)<d_G(s)$ holds, then $x^{-1}(1)\neq\emptyset$.
\end{lemma}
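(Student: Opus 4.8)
The statement to prove is: if $G$ has minimum degree at least three, $s\in V$, $x$ is a half-integral minimum $s$-cycle cover, and $2x(V)<d_G(s)$, then $x^{-1}(1)\neq\emptyset$. The plan is to argue by contradiction: suppose $x^{-1}(1)=\emptyset$, so $x$ takes only values $0$ and $\tfrac12$. Let $H=x^{-1}(\tfrac12)$. Then $x(V)=\tfrac12|H|$, and the hypothesis $2x(V)<d_G(s)$ becomes $|H|<d_G(s)$. The goal is to expose an $s$-cycle $C$ with $x(V(C))<1$, contradicting that $x$ is an $s$-cycle cover, OR to directly contradict minimality of $x$.

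**Key steps.** First I would analyze the structure forced by $x$ being an $s$-cycle cover with values in $\{0,\tfrac12\}$. Consider the subgraph $G-s$ restricted to vertices of $x$-value $0$ (equivalently $G - s - H$). I claim each connected component of $G-s-H$ is a tree; indeed, if some component contained a cycle, that cycle (made into a closed walk without U-turns, which is automatic on a simple cycle) together with $s$ need not touch $s$ — but more to the point, a cycle disjoint from $s$ is not an $s$-cycle, so I must instead use that $s$ has many edges into $G-s-H$. Since $|H|<d_G(s)$, strictly more than $d_G(s)-|H|$ of the $d_G(s)$ edges at $s$ go to vertices outside $H$; a counting argument will show that at least two edges of $\delta(s)$ enter the same component $T$ of $G-s-H$, or that a single component receiving an edge already has an internal cycle. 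Pinning down which requires care, and this is the crux. Using minimum degree $\ge 3$: every vertex of $H$ has degree $\ge 3$, and every $x$-value-$0$ vertex has degree $\ge 3$ as well, which lower-bounds the edge count and forces non-tree structure somewhere reachable from $s$ along $0$-valued vertices.

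**The reduction to a contradiction.** Once I have either (a) two distinct edges $su, su'\in\delta(s)$ with $u,u'$ in the same tree component $T$ of $G-s-H$, or (b) a component $T$ of $G - s - H$ reached by an edge from $s$ that is not a tree, I build an $s$-cycle $C$ lying entirely in $T\cup\{s\}$: in case (a), $C = s\,{-}\,u \,{-}\,(\text{path in }T)\,{-}\,u'\,{-}\,s$ is a simple $s$-cycle; in case (b), take an edge $su$ into $T$, a cycle $Z$ in $T$, and connect $u$ to $Z$ by a path in $T$, traversing $Z$ and returning — checking no U-turn and each edge used at most twice, this is a valid $s$-cycle. In either case $V(C)\setminus\{s\}$ consists only of $x$-value-$0$ vertices, so $x(V(C)) = 0 < 1$, contradicting that $x$ is an $s$-cycle cover. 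Hence $x^{-1}(1)\neq\emptyset$.

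**Main obstacle.** The hard part will be the counting step that guarantees case (a) or (b) actually occurs from the hypothesis $|H|<d_G(s)$ together with minimum degree $\ge 3$. The naive pigeonhole "$d_G(s)$ edges at $s$, at most $|H|$ of them hit $H$, so $>0$ hit a $0$-vertex" only gives one edge into the $0$-part, which is not immediately enough; I expect the clean way is to observe that after the $s$-cycle cover reduction would be applicable, the relevant bridge/tree components have been accounted for, and to invoke the proof technique of Lemma~\ref{lem:degree_bound} in reverse — there, distinctness of the witnesses $w_i\in x^{-1}(\tfrac12)$ bounded $|N_1|+2|N_2|\le 2x(V)$; here, if $x^{-1}(1)=\emptyset$ then $N_2=\emptyset$ and $d_G(s)=|N_1|\le|x^{-1}(\tfrac12)|=2x(V)$, contradicting $2x(V)<d_G(s)$ directly. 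So in fact the cleanest route is: assume $x^{-1}(1)=\emptyset$; then in $G$ itself (no reduction needed, since there are no value-$1$ vertices to insert double edges to, and $G$ already has minimum degree $\ge 3$) the argument of Lemma~\ref{lem:degree_bound} applies verbatim to show $d_G(s)\le 2x(V)$, contradicting the hypothesis. I would therefore structure the proof as a short contrapositive-style appeal to (the proof of) Lemma~\ref{lem:degree_bound}, spelling out that when $X=\emptyset$ no edges are inserted or deleted so $G'=G$, every $v_i\in N_1$ yields a distinct $w_i\in x^{-1}(\tfrac12)$, and hence $d_G(s)=|N_1|\le|x^{-1}(\tfrac12)|=2x(V)$.
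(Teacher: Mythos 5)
Your final route is exactly the paper's proof: observe that if $x^{-1}(1)=\emptyset$ then the $s$-cycle cover reduction neither inserts edges (no $X$) nor deletes bridges (since $G-s$ has minimum degree at least two, as there can be no double edge at $s$ when $x^{-1}(1)=\emptyset$, so $G-s$ has no tree components), hence $G'=G$, and applying Lemma~\ref{lem:degree_bound} gives $d_G(s)=d_{G'}(s)\le 2x(V)$, contradicting the hypothesis. The earlier, more laborious construction of a cheap $s$-cycle in your write-up is a dead end you correctly abandon, and it can simply be deleted.
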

\begin{proof}
From Lemma~\ref{lem:degree_bound}, for a graph $G'$ obtained by applying the $s$-cycle cover reduction, it holds that $d_{G'}(s)\leq 2x(V)$.
When $x^{-1}(1)=\emptyset$, the reduction inserts no new edges and only removes the bridges of $G$ connecting $s$ and tree components of $G-s$.
Because the graph $G-s$ has minimum degree at least two, it has no tree components.
Thus, we have $G'=G$, which is a contradiction.
\end{proof}

Now, we can prove the upper bound on the number of iterations.
\begin{lemma}\label{lem:linear}
The modified Algorithm~\ref{alg:simple} stops in $O(k^3)$ iterations.
\end{lemma}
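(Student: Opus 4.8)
I would prove Lemma~\ref{lem:linear} by partitioning the execution of the modified Algorithm~\ref{alg:simple} into \emph{phases}, where a phase is a maximal block of consecutive iterations of the \texttt{while} loop during which $k$ stays constant. Since $k$ never increases and the algorithm halts (returning NO) as soon as $k<0$, there are $O(k)$ phases, and likewise only $O(k)$ iterations that actually decrease $k$: those executing line~\ref{line:large_x}, those triggered by the new rule removing a vertex incident to more than $k$ double edges, and the self-loop case of the basic reductions. The first observation I would record is that inside a phase, an iteration that neither returns nor decreases $k$ must fall through to line~\ref{line:cycle_cover_reduction}; hence every iteration of a phase applies an $s$-cycle cover reduction, and it suffices to bound the number of such reductions within one phase by $O(k^2)$.

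For that bound I would track the number of double edges. The two new rules guarantee that at all times the graph has at most $k^2$ double edges. The key structural fact is that \emph{within a phase no double edge is ever destroyed}: a double edge can vanish only when one of its endpoints is deleted, but such a vertex has degree at least two purely because of that double edge, so the basic reductions can remove it only via rule~3, which creates a self-loop and hence triggers rule~1 — a $k$-decrement that ends the phase; the other vertex-removal steps (line~\ref{line:large_x} and the two new rules) also end the phase; and the $s$-cycle cover reduction deletes only the bridge set $B$, which consists entirely of single edges because a parallel edge is never a bridge. Consequently, within a phase the number of double edges is nondecreasing and bounded by $k^2$.

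The heart of the argument is then to show that each $s$-cycle cover reduction performed in a phase \emph{strictly increases} the number of double edges; together with the cap $k^2$ this caps the number of reductions per phase at $k^2$, and summing over the $O(k)$ phases (plus the $O(k)$ $k$-decrementing iterations and the single terminal iteration) gives the claimed $O(k^3)$ bound. Since we only reduce at a vertex $s$ with $d_G(s)>2k\ge 2x(V)$ and the graph has minimum degree at least three after the basic reductions, Lemma~\ref{lem:no_double} yields $x^{-1}(1)\ne\emptyset$, so the reduction does insert a double edge between $s$ and some vertex of $x^{-1}(1)$. This fails to increase the count only in the degenerate situation where every vertex of $x^{-1}(1)$ is already joined to $s$ by a double edge in $G$ and the reduction removes nothing but single-edge bridges, and I expect ruling out (or separately amortizing) exactly this situation to be the main obstacle. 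The plan there is to use the minimality of the $s$-cycle cover $x$ together with Lemma~\ref{lem:degree_bound}: the bound $d_{G'}(s)\le 2x(V)$ forces all but at most $\lvert x^{-1}(\tfrac12)\rvert$ of the single edges incident to $s$ to be deleted as bridges to \emph{tree} components of $G-x^{-1}(1)-s$, so the reduction provably shrinks the graph in that case, and one charges this shrinkage against an auxiliary graph-size quantity that the new ``$\le k^2$ double edges / $\le k$ double edges per vertex'' rules keep $O(k^2)$-bounded at the start of each phase. Once this progress statement is in hand, the counting over phases is immediate and finishes the proof.
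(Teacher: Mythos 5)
Your phase decomposition and the focus on double edges as the monotone quantity are close in spirit to the paper's proof, which tracks the number of ``red'' double edges (those created by the $s$-cycle cover reduction) via a potential function. However, you have correctly identified the central obstacle — the degenerate case in which every vertex of $x^{-1}(1)$ is already joined to $s$ by a double edge and no new double edge is created — and your proposed resolution for it does not work. You sketch a charge ``against an auxiliary graph-size quantity that the new rules keep $O(k^2)$-bounded,'' but what actually shrinks in the degenerate case is the number of \emph{single} edges incident to $s$, and the total number of single edges in the graph is $\Theta(m)$, not $O(k^2)$; the new rules only bound double edges. So the charging scheme, as described, would only give an $O(m)$ bound on the degenerate iterations per phase, not $O(k^2)$.

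The paper handles exactly this case with a second potential term $\beta$, defined as the number of vertices of degree greater than $2k$ that are incident to at least one red double edge. The observation is that when the reduction creates no new (red) double edge, Lemma~\ref{lem:no_double} still forces $s$ to be incident to some red double edge \emph{before} the reduction; since $d_G(s)>2k$ before and $d_{G'}(s)\leq 2k$ after (Lemma~\ref{lem:degree_bound}), and since the basic reductions never increase any degree, $s$ leaves the set counted by $\beta$ while no other vertex enters it. Conversely, when $c\geq 1$ new red double edges are created, $\beta$ can increase by at most $c$, which the $-2\alpha$ term dominates. The full potential $\phi=2k_0^2k+4k_0k+3k-2\alpha+\beta$ then decreases by at least one in every iteration, and since $\phi=O(k^3)$ initially and nonnegative while the algorithm runs, the iteration count follows. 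To complete your proof along your lines, you would need to introduce a quantity playing the role of $\beta$ (or some equivalent device measuring progress among high-degree vertices touching double edges) rather than charging against edge count; without that, the degenerate case is a genuine unfilled gap.
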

\begin{proof}
We color each double edge red or blue.
Initially, all the double edges are blue, and after applying the $s$-cycle cover reduction, we color all the double edges incident to $s$ red
(not only newly inserted double edges but also blue colored edges are recolored to red).
The other double edges, which are created by the deletion of degree two vertices in the basic reductions, are colored blue.
Let $\alpha$ denote the number of red double edges and $\beta$ denote the number of vertices of degree larger than $2k$ and incident to at least one red double edge.
Let $k_0$ be the initial value of $k$ and $\phi$ be a potential defined as $\phi=2k_0^2k+4k_0k+3k-2\alpha+\beta$.
Observe that, because red double edges are created only by the $s$-cycle cover reduction, each vertex can be incident to at most $k_0+1$ red double edges,
and that the number of red double edges is always at most $k_0^2+k_0$
(at most $k_0^2$ edges before applying the $s$-cycle cover reduction and the reduction can create at most $k_0$ red double edges).

Initially, there are no red edges; thus, the initial potential is $2k_0^3+4k_0^2+3k_0=O(k_0^3)$.
If $\phi$ becomes negative, we have $k<0$ or $\alpha>k_0^2k\geq k^2$.
Thus, the algorithm returns NO.

When $k$ is decremented, $\alpha$ can decrease by at most $k_0+1$.
Because there are at most $k_0^2+k_0$ red double edges, $\beta$ can increase by at most $2k_0^2+2k_0$.
Thus, $\phi$ decreases by at least
\[
2k_0^2k+4k_0k+3k-2\alpha+\beta-(2k_0^2(k-1)+4k_0(k-1)+3(k-1)-2(\alpha-k_0-1)+(\beta+2k_0^2+2k_0))\geq 1
\]

When applying the $s$-cycle cover reduction, if the reduction creates $c\enspace (\geq 1)$ new red double edges, $\alpha$ increases by $c$ and $\beta$ can increase by at most $c$.
Thus, $\phi$ decreases by at least
\[
2k_0^2k+4k_0k+3k-2\alpha+\beta-(2k_0^2k+4k_0k+3k-2(\alpha+c)+(\beta+c))\geq c\geq 1
\]
After applying the $s$-cycle cover reduction, from Lemma~\ref{lem:no_double}, $s$ is incident to at least one double edge.
Thus, if the reduction does not create any new red double edges, $s$ must be incident to at least one red double edge before the reduction.
From~\ref{lem:degree_bound}, the degree of $s$ becomes at most $2k$ after the reduction.
Therefore $\beta$ decreases by one; thus, $\phi$ decreases by one.

Now, we have shown that each iteration decreases the potential $\phi$ by at least one.
Because $\phi$ is initially $O(k^3)$ and is always non-negative, the number of iterations is $O(k^3)$.
\end{proof}

	\bibliographystyle{abbrv}
\bibliography{paper}

\end{document}